\documentclass[sigconf,nonacm]{acmart}
\usepackage{booktabs}
\usepackage{array}
\usepackage{amsmath}
\usepackage{xcolor}
\usepackage{tikz}
\usetikzlibrary{positioning,arrows.meta,decorations.pathreplacing,shapes.geometric}
\usetikzlibrary{shapes.callouts,shadows,calc,backgrounds,fit}
\usepackage{fontawesome5}
\usepackage{pgfplots}
\pgfplotsset{compat=1.18}
\definecolor{cMem}{HTML}{0072B2}
\definecolor{cRec}{HTML}{E69F00}
\definecolor{cOurs}{HTML}{009E73}
\definecolor{cEvid}{HTML}{D55E00}
\definecolor{cLLM}{HTML}{56B4E9}
\newcommand{\gain}[1]{\textbf{#1}}
\newcommand{\method}{\textsc{AgentMemRec}}

\newcommand{\Hay}{\mathcal{H}}
\newcommand{\Pool}{\mathcal{P}}
\newcommand{\Slate}{\mathcal{S}}
\newcommand{\Gold}{\mathcal{G}}
\newcommand{\tok}{\operatorname{tok}}
\newcommand{\rend}{\operatorname{render}}

\newcommand{\Acc}{\mathrm{Acc}}

\newcommand{\Kj}{K}
\newcommand{\rpool}{r_{\mathrm{pool}}}
\newcommand{\rrank}{r_{\mathrm{rank}}}
\newcommand{\rslate}{r_{\mathrm{slate}}}
\newcommand{\acomp}{a_{\mathrm{c}}}
\newcommand{\ainc}{a_{\mathrm{i}}}

\begin{document}

\title{Agent Memory as Recommendation: Recommender-System Design for Long-Term Memory of LLM Agents}
\author{Xinyu Zhang}
\affiliation{\institution{Fudan University}\country{}}
\email{zhangxinyu@fudan.edu.cn}
\author{Weiwei Sun}
\affiliation{\institution{Fudan University}\country{}}
\email{wwsun@fudan.edu.cn}

\begin{abstract}
Large language model (LLM) agents that serve users across many sessions must choose, for every request, the few stored memories that enter a limited context window.
This choice largely decides whether the agent answers correctly, and it runs on every request at Web scale.
This selection faces three challenges: a large and growing store, a per-request cost budget, and multi-session questions that need all of their evidence.
Many memory systems build notes or graphs with hundreds to thousands of LLM calls per history, and each is designed on its own, so it is unclear which component yields the accuracy.
These challenges mirror the properties of recommendation, so we formalize agent memory as budgeted slate recommendation, with the request as the user, stored memories as items and the context window as a slate.
The formalization places memory systems in one design space of recommender cascades and factorizes accuracy into the completeness that each stage controls.
Guided by it, we propose \method{}, which adapts candidate generation, learning to rank and slate construction from recommender systems and selects memories without any LLM call.
On multi-session questions of LongMemEval, \method{} outperforms memory and context-selection methods, including A-Mem, HippoRAG~2, DynamicRAG and T-Mem, by 10 to 31 points of accuracy.
It matches an LLM reranker that makes 64 calls per query at 3.2 times lower serving cost, and on LoCoMo it outperforms Mem0 and Zep without any LLM call at write time.
The factorization attributes the gains to slates that more often hold all of the evidence, so recommender-system design offers an effective and efficient foundation for the long-term memory of LLM agents.

\end{abstract}

\maketitle
\begin{figure}[t]
  \centering
  \begin{tikzpicture}[>=Stealth, font=\footnotesize,
      shadowed/.style={drop shadow={opacity=0.16, shadow xshift=0.25mm, shadow yshift=-0.3mm}},
      card/.style={draw=cMem!55, fill=white, rounded corners=1.5pt, minimum width=0.95cm, minimum height=0.8cm, inner sep=0pt, shadowed},
      ecard/.style={card, draw=cEvid, line width=0.7pt},
      ub/.style={fill=cMem!45, rounded corners=0.8pt, minimum width=0.46cm, minimum height=0.09cm, inner sep=0pt},
      ab/.style={ub, fill=black!20},
      eb/.style={ub, fill=cEvid!85},
      box/.style={minimum height=1.0cm, rounded corners=3pt, shadowed},
      tag/.style={font=\scriptsize\itshape\bfseries, text=cRec!85!black, anchor=base},
      turn/.style={minimum width=0.85cm, minimum height=0.12cm, inner sep=0pt, rounded corners=0.6pt},
      say/.style={rectangle callout, draw=black!35, fill=white, align=center, inner sep=2.5pt, font=\scriptsize, shadowed},
      quote/.style={rectangle callout, draw=cEvid!70, fill=cEvid!6, align=center, inner sep=1.5pt, font=\tiny, text=cEvid!80!black},
      flow/.style={-{Stealth[length=1.9mm, width=1.6mm]}, thick, black!55, shorten <=1pt, shorten >=1pt}]
    \foreach \n/\x/\d/\e in {s1/0.69/Apr 8/0, s3/2.08/May 2/1, s17/4.84/Jun 10/1, s50/7.61/Jun 19/0} {
      \ifnum\e=1 \node[ecard] (\n) at (\x,2.95) {}; \else \node[card] (\n) at (\x,2.95) {}; \fi
      \begin{scope}
        \clip[rounded corners=1.5pt] (\n.north west) rectangle (\n.south east);
        \ifnum\e=1 \fill[cEvid!22] (\n.north west) rectangle ([yshift=-0.21cm]\n.north east);
        \else \fill[cMem!18] (\n.north west) rectangle ([yshift=-0.21cm]\n.north east); \fi
      \end{scope}
      \node[font=\tiny, text=black!75] at ([yshift=-0.105cm]\n.north) {\d};
      \ifnum\e=1 \node[eb] at ([xshift=0.11cm,yshift=-0.33cm]\n.north) {}; \else \node[ub] at ([xshift=0.11cm,yshift=-0.33cm]\n.north) {}; \fi
      \node[ab] at ([xshift=-0.11cm,yshift=-0.49cm]\n.north) {};
      \node[ub] at ([xshift=0.11cm,yshift=-0.65cm]\n.north) {};
    }
    \node[font=\scriptsize, text=black!45] at (3.46,2.95) {$\cdots$};
    \node[font=\scriptsize, text=black!45] at (6.22,2.95) {$\cdots$};
    \begin{scope}[on background layer]
      \node[draw=cMem!45, fill=cMem!4, rounded corners=3pt, fit=(s1)(s50), inner sep=3.5pt] (hist) {};
    \end{scope}
    \node[font=\scriptsize, text=cMem!80!black, anchor=south west, inner sep=1pt] at (hist.north west) {\faDatabase\ long-term memory $\Hay(q)$: 50 sessions, about 500 turns};
    \node[font=\scriptsize\itshape\bfseries, text=cRec!85!black, anchor=south east, inner sep=1pt] at (hist.north east) {items};
    \node[quote, callout absolute pointer={([yshift=1pt]s3.south)}] at (2.08,1.93) {``pottery on\\[-1pt]Tuesdays''};
    \node[quote, callout absolute pointer={([yshift=1pt]s17.south)}] at (4.84,1.93) {``moved it to\\[-1pt]Thursdays''};
    \node[font=\Large, text=black!65] (user) at (0.2,1.0) {\faUserCircle};
    \node[say, box, text width=1.75cm, callout absolute pointer={(0.43,1.0)}] (q) at (1.415,1.0) {Which day is my pottery class now?\\[1pt]\textcolor{black!50}{\tiny Jun 20}};
    \node[box, draw=cOurs, fill=cOurs!10, align=center, inner sep=2.5pt, text width=1.9cm, font=\scriptsize] (sel) at (3.74,1.0)
      {\textcolor{cOurs!55!black}{\faFilter}\ \textbf{\method{}}\\[1pt]recall$\,\to\,$rank$\,\to\,$fill};
    \node[fill=cOurs!85!black, text=white, rounded corners=4pt, font=\tiny\bfseries, inner sep=1.6pt] at (sel.south) {0 LLM calls};
    \node[box, draw=black!40, fill=white, rounded corners=2pt, minimum width=1.05cm, inner sep=0pt] (slate) at (5.625,1.0) {};
    \begin{scope}
      \clip[rounded corners=2pt] (slate.north west) rectangle (slate.south east);
      \fill[black!8] (slate.north west) rectangle ([yshift=-0.22cm]slate.north east);
    \end{scope}
    \node[font=\tiny, text=black!65] at ([yshift=-0.11cm]slate.north) {$\le B$ tokens};
    \foreach \y/\c in {1.13/cEvid!75, 0.95/black!18, 0.77/cEvid!75, 0.59/black!18} \node[turn, fill=\c] at (5.625,\y) {};
    \node[font=\large, text=cLLM!80!black, inner sep=1pt] (llm) at (6.67,1.0) {\faRobot};
    \node[say, anchor=west, callout absolute pointer={(6.93,1.0)}] (ans) at (7.0,1.0) {Thursdays \textcolor{cOurs!80!black}{\faCheckCircle}};
    \draw[flow] (q.east) -- (sel.west);
    \draw[flow, cOurs!80!black] (sel.east) -- (slate.west);
    \draw[flow] (slate.east) -- (llm.west);
    \draw[flow, cMem!70] (hist.south -| sel.north) -- (sel.north);
    \node[tag] at (1.415,0.12) {user};
    \node[tag] at (5.625,0.12) {slate};
    \node[font=\scriptsize, text=black!60, anchor=base] at (6.67,0.12) {agent};
    \node[tag] at (ans.south |- 0,0.12) {feedback};
  \end{tikzpicture}
  \caption{Agent memory as recommendation. Evidence for the request lies in two of 50 sessions, and a selector with no LLM call fills a slate of stored turns under a token budget $B$, from which the agent's LLM answers. Orange labels give the recommender role of each element.}
  \Description{A long-term memory of 50 dated sessions is drawn as a row of chat cards with user and assistant turns. Sessions 3 and 17 are highlighted because they hold the evidence, pottery on Tuesdays and moved to Thursdays. A user asks on June 20 which day the pottery class is now. AgentMemRec, which recalls, ranks and fills with no LLM call, selects a slate of at most B tokens that contains both evidence turns. An answering LLM reads the slate and answers Thursdays, which is judged correct. Orange labels mark the history as items, the request as the user, the context as the slate and the judged answer as feedback.}
  \label{fig:teaser}
\end{figure}

\section{Introduction}
\label{sec:intro}

Web-based chat assistants keep a long-term memory of each user's past sessions \citep{wu2025longmemeval}, so for every request their large language model (LLM) agent must decide which stored memories enter its limited context (Figure~\ref{fig:teaser}).
This decision largely determines whether the agent answers correctly, since a context that misses an evidence turn loses 37 to 64 points of accuracy with Qwen for every turn-level design we compare.
The decision also runs on every request, so its latency and cost grow with Web traffic, as they do for Web recommenders, and production memory layers are judged by these costs \citep{chhikara2025mem0}.

Three properties make the decision hard.
The store is large and keeps growing: in LongMemEval-M, the history of one question holds about 500 sessions and 1.5M tokens \citep{wu2025longmemeval}, while the answering model in our setting reads at most 2,700 tokens.
The budget is per request, so scoring the store with an LLM costs calls in proportion to its size and to traffic.
The value is judged on the whole context, since a question whose evidence spans several sessions needs all of it, and feedback arrives for the context and never for a single memory.

Memory systems build different structures, such as linked notes, knowledge graphs, temporal hierarchies and fact-augmented keys \citep{xu2025amem,gutierrez2025hipporag2,li2026timem,wu2025longmemeval}.
Many spend LLM calls to build them, about 440 to 2,270 per question for HippoRAG~2, A-Mem and T-Mem on LongMemEval-S.
Without a shared formulation of the decision, it is unclear which component produces the accuracy and what it should cost.
Existing selectors also leave evidence out: the four compared methods whose contexts are turns hold all of the evidence for only 0.50 to 0.68 of the multi-session questions.

We observe that the three properties are those of recommendation, and we formalize the decision as budgeted slate recommendation.
Recommender systems choose a few items from a large catalogue for every request under a latency budget, and they judge the shown set as a whole \citep{covington2016youtube,ie2019slateq}.
In memory selection, the request plays the user, stored turns are the items, and the context is a slate under a token budget.
The formalization defines a design space in which every memory system is a point, a cascade of memory construction, recall, ranking and context assembly.
Three results about this space reduce design to one problem: within an LLM budget, raise the product of pool completeness and ranking completeness, and fill the slate.

We build \method{} by solving this problem with no LLM call.
A wide pool from two cheap recall channels raises pool completeness, a ranker learned over ten cheap features raises ranking completeness, and the pool size is chosen by their product.
On the LongMemEval-S questions whose evidence spans several sessions, \method{} outperforms Adaptive-k, DynamicRAG, HippoRAG~2, A-Mem and T-Mem by 10 to 31 points of accuracy, and the memory and context-selection methods run on LongMemEval-M by 11 to 33 points.
It matches or exceeds a reranker that makes 64 LLM calls per question, at 3.2 times lower serving cost per request.
Under the shared LongMemEval-S protocol, it scores 82.5 on the test questions and 83.2 on all 500, above what MemOS and TiMem report.
On LoCoMo, the frozen cascade outperforms Mem0 and Zep with no LLM call at write time.
The factorization explains these gains: the slates of \method{} hold all of the evidence for 0.94 of the questions, and on jointly complete slates the systems do not differ significantly.

Our contributions are:
\begin{itemize}
  \item A formalization of long-term agent memory as budgeted slate recommendation, with a design space of memory systems and three results that reduce design to raising pool and ranking completeness within an LLM budget.
  \item \method{}, which solves this problem with no LLM call, outperforms LLM-built memories and context selectors on LongMemEval and production memory layers on LoCoMo, and matches a 64-call LLM reranker at 3.2 times lower cost.
  \item A measured factorization showing that slate completeness carries the gains, and a test of where item-wise ranking stops sufficing, at $\kappa\ge3$ and with near-miss filler.
\end{itemize}

\section{Related Work}
\label{sec:related}

\paragraph{Long-term memory for LLM agents.}
LoCoMo \citep{maharana2024locomo} and LongMemEval \citep{wu2025longmemeval} evaluate assistants on long multi-session histories.
Most memory systems decide at write time what to store and how to index it, as linked notes \citep{xu2025amem}, consolidated facts \citep{chhikara2025mem0}, temporal knowledge graphs \citep{rasmussen2025zep} or fact graphs searched by personalized PageRank \citep{gutierrez2025hipporag2}.
Others build hypergraphs, temporal hierarchies, boundary-guided segments or antecedent links \citep{yue2026hypermem,li2026timem,zhong2026hingemem,tan2026cable}, and verbatim chunks have been reported to outperform extracted artifacts on long conversations \citep{an2026fidelity}.
In the recommender view, these designs are item side-information or recall structure, and most of them spend LLM calls at write time.
\method{} works at read time over raw turns with no LLM call, so its cost per request stays flat as stores grow, and it can use a memory builder's units as a recall channel.

\paragraph{Context selection under a budget.}
Adaptive-k cuts a ranked list at the largest gap in its similarity scores \citep{taguchi2025adaptivek}, and DynamicRAG trains a reranker that chooses both the order and the number of passages \citep{sun2025dynamicrag}.
SetR replaces ranking with set selection by an LLM \citep{lee2025setr}, which lies outside a budget of no LLM calls, and ranked-list truncation learns where to cut a list \citep{bahri2020choppy,meng2024rlt}.
High-scoring passages that do not answer the question hurt \citep{cuconasu2024noise}, as do retrieved hard negatives \citep{jin2025longcontext} and input length alone \citep{du2025context}.
Like these selectors, \method{} orders candidates and fills a budget. It keeps the budget fixed because a fill dominates every cutoff of a ranking when the reward does not fall as the fill extends it.
We add a label-controlled contrast of ranked and random filler and a noise-floor control for the per-question headroom that cutoff methods assume.

\paragraph{Recommender techniques and memory.}
Industrial recommenders split candidate generation from ranking \citep{covington2016youtube} and treat the shown set as the unit of decision \citep{ie2019slateq}, and industrial search arranges rankers of increasing cost in cascades \citep{liu2017cascade}.
Set-aware stages re-rank with attention over the list \citep{pei2019prm} or select diverse subsets with determinantal point processes \citep{chen2018dpp}.
LambdaRank with gradient-boosted trees is a standard learned ranker \citep{burges2010ranknet,wu2010adapting,ke2017lightgbm}, reciprocal rank fusion merges channels \citep{cormack2009rrf}, and LLM rankers can be distilled into cross-encoders \citep{schlatt2025rankdistillm}.
The probability ranking principle orders documents by relevance when each is judged on its own \citep{robertson1977prp}.
For agent memory, RMM trains a reranker online from the citations of the answering model \citep{tan2025rmm}, and recent work studies learned memory rerankers \citep{li2026memreranker,pan2026convmemory,feng2026retriever}, pipelines without LLM calls \citep{piehl2026memfit} and small-model context construction \citep{yu2026lazymem}.
MemRec builds agentic recommenders on a collaborative memory \citep{chen2026memrec}, and RecMem consolidates agent memory by recurrence \citep{dai2026recmem}.
We formalize agent memory as budgeted slate recommendation, extend the probability ranking principle to complete evidence sets, and derive and measure which stages of the design space carry accuracy.
Unlike RMM, \method{} needs no feedback from the answering model at use time, which our frozen protocol could not provide, and we did not run these rerankers.
MemRec applies memory to recommendation and RecMem decides when to consolidate, whereas we select a context.

\section{Agent Memory as Recommendation}
\label{sec:problem}

This section casts memory selection as budgeted slate recommendation, lays out the design space that this view opens, and proves three results about the space that guide the design of \method{}.

\subsection{Memory selection as slate recommendation}
\label{sec:formal}

\begin{definition}[Budgeted slate recommendation]
\label{def:bsr}
A budgeted slate recommendation problem has a distribution of requests $x$, a catalogue $I(x)$ for each request, a monotone set cost $c_x$ on subsets of $I(x)$, a budget $B$, and a slate reward $R(\Slate,x)\in[0,1]$ that is observed only for the slate shown.
A policy $\pi$ maps each request to a slate with $c_x(\pi(x))\le B$, and its value is $V(\pi)=\mathbb{E}_x[R(\pi(x),x)]$.
\end{definition}

A question $q$ arrives at date $t_q$ with a history $\Hay(q)$, the set of user and assistant turns in its dated sessions.
Memory selection is an instance of Definition~\ref{def:bsr}.
The request is $x=(q,t_q)$ together with the agent's state, and the catalogue is $I(x)=\Hay(q)$, whose units are turns and, optionally, records that cite turns.
The cost is the token count of the rendered slate, $c=\tok\circ\rend$, which grows as units are added.
The reward $u(\Slate,q)$ is the probability that the judge accepts the answer that an answering model generates from $\rend(\Slate)$, so the selector solves
\begin{equation}
  \max_{\pi}\ \mathbb{E}_{q}\bigl[u(\pi(q),q)\bigr]\quad\text{s.t.}\quad \tok\bigl(\rend(\pi(q))\bigr)\le B\ \text{for every } q .
  \label{eq:objective}
\end{equation}
Judge accuracy $\Acc(\pi)$ estimates this objective, with expectations over questions, perturbed budgets and decoding.
The instance has the three properties that shape recommender design.
\emph{Catalogue scale}: a LongMemEval-S history holds about 490 turns and a LongMemEval-M history about 4,900, while the slate admits 2,700 tokens.
\emph{Per-request budget}: the selector runs on every request, so scoring the catalogue with an LLM costs calls in proportion to its size and to traffic.
\emph{Slate-level value}: $u$ is a set function, a question whose evidence spans $\kappa\ge2$ annotated sessions needs all of it, and feedback arrives for the slate and never for a single unit, as in slate recommendation \citep{ie2019slateq}.

\begin{figure*}[t]
  \centering
  \begin{tikzpicture}[>=Stealth, font=\scriptsize,
      cell/.style={draw=black!25, fill=black!3, rounded corners=1.5pt, minimum height=0.36cm, inner sep=1.5pt, align=center},
      llm/.style={cell, draw=cLLM!80!black, fill=cLLM!15},
      ours/.style={cell, draw=cOurs!80!black, fill=cOurs!12, font=\scriptsize\bfseries},
      grp/.style={anchor=west, font=\scriptsize\itshape, text=black!55},
      sysname/.style={anchor=west},
      bub/.style={minimum width=0.24cm, minimum height=0.09cm, inner sep=0pt, rounded corners=0.8pt},
      bar/.style={minimum height=0.1cm, inner sep=0pt, rounded corners=0.4pt, fill=black!20},
      ebar/.style={bar, fill=cEvid!75},
      term/.style={font=\scriptsize\itshape\bfseries, text=cRec!85!black},
      obj/.style={font=\scriptsize, text=black!70},
      hd/.style={font=\scriptsize\bfseries, align=center}]
    \def\xP{4.45}\def\wP{2.9cm} \def\xG{7.7}\def\wG{3.1cm} \def\xF{10.95}\def\wF{2.9cm} \def\xH{13.85}\def\wH{2.4cm} \def\xS{16.45}\def\wS{2.0cm}
    \newcommand{\C}[4][cell]{\node[#1, text width=#3] at (#2,\y) {#4};}
    \newcommand{\R}{\,\textcolor{cLLM!70!black}{\faRobot}}
    \newcommand{\I}[1]{\textcolor{cMem!75!black}{#1}\ }
    \node[rotate=90, font=\scriptsize\bfseries, text=black!60] at (0.18,6.75) {(a) cascade};
    \foreach \x/\m/\r in {\xP/memory construction $\varphi$/write-time map, \xG/memory recall $g$/candidate generation, \xF/memory ranking $f$/ranker, \xH/context assembly $h$/slate builder, 16.45/answer and judge/feedback} {
      \node[font=\scriptsize\bfseries, text=black!80] at (\x,8.38) {\m};
      \node[term] at (\x,8.08) {\r};}
    \node[font=\normalsize, text=black!65] at (0.72,7.3) {\faUserCircle};
    \node[anchor=west, align=left, inner sep=0pt] at (0.98,7.3) {request $q$\\at date $t_q$};
    \node[font=\normalsize, text=cMem!85!black] at (0.72,6.25) {\faDatabase};
    \node[anchor=west, align=left, inner sep=0pt] at (0.98,6.25) {long-term\\memory $\Hay(q)$};
    \shade[left color=cMem!10, right color=cOurs!14, draw=black!25] (2.75,7.75) -- (15.15,7.25) -- (15.15,6.25) -- (2.75,5.75) -- cycle;
    \foreach \x/\h in {6.0/0.869, 9.375/0.733, 12.525/0.606} \draw[white, line width=1.3pt] (\x,6.75-\h) -- (\x,6.75+\h);
    \draw[->, thick, black!55] (2.25,7.3) -- (2.72,7.15);
    \draw[->, thick, black!55] (2.25,6.25) -- (2.72,6.35);
    \foreach \k in {0,...,5} {
      \pgfmathsetmacro{\cx}{3.12+0.47*\k}
      \node[draw=cMem!50, fill=white, rounded corners=1pt, minimum width=0.38cm, minimum height=0.86cm, inner sep=0pt] (c\k) at (\cx,6.68) {};
      \fill[cMem!25] ([xshift=0.25pt,yshift=-0.25pt]c\k.north west) rectangle ([xshift=-0.25pt,yshift=-0.14cm]c\k.north east);
      \foreach \j/\dx/\c in {0/0.04/cMem!45, 1/-0.04/black!20, 2/0.04/cMem!45, 3/-0.04/black!20}
        \node[bub, fill=\c] at (\cx+\dx,6.9-0.18*\j) {};}
    \node[bub, fill=cEvid!85] at (3.63,6.9) {}; \node[bub, fill=cEvid!85] at (5.04,6.54) {};
    \node[obj] at (\xP,7.4) {dated sessions of turns};
    \node[font=\tiny, text=cMem!75!black] at (\xP,6.03) {\faStickyNote\,\faProjectDiagram\,\faKey\ \ optional structure};
    \foreach \i in {0,1,2} \foreach \j/\c in {0/cMem!45, 1/black!20, 2/cMem!45, 3/black!20} \node[bub, minimum width=0.3cm, fill=\c] at (7.3+0.4*\i,6.95-0.2*\j) {};
    \node[bub, minimum width=0.3cm, fill=cEvid!85] at (7.3,6.75) {}; \node[bub, minimum width=0.3cm, fill=cEvid!85] at (8.1,6.35) {};
    \node[obj] at (\xG,7.25) {recalled pool $\Pool(q)$};
    \foreach \y/\s in {6.92/ebar, 6.75/ebar, 6.58/bar, 6.41/bar, 6.24/bar} \node[\s, minimum width=1.1cm] at (\xF,\y) {};
    \foreach \y/\r in {6.92/1, 6.75/2, 6.58/3} \node[font=\tiny, text=black!55, anchor=east] at (10.33,\y) {\r};
    \node[obj] at (\xF,7.17) {ranked memories $\sigma$};
    \node[draw=black!45, fill=white, rounded corners=1pt, minimum width=1.1cm, minimum height=0.66cm, inner sep=0pt] (ctx) at (13.85,6.62) {};
    \fill[black!10] ([xshift=0.25pt,yshift=-0.25pt]ctx.north west) rectangle ([xshift=-0.25pt,yshift=-0.13cm]ctx.north east);
    \foreach \y/\s in {6.73/ebar, 6.59/ebar, 6.45/bar} \node[\s, minimum width=0.9cm, minimum height=0.08cm] at (13.82,\y) {};
    \draw[cOurs!80!black, densely dashed, thick] (14.4,6.25) -- (14.4,6.99);
    \node[font=\tiny, text=cOurs!60!black, anchor=west, inner sep=1pt] at (14.42,6.62) {$B$};
    \node[obj] at (\xH,7.1) {agent context $\Slate_B$};
    \draw[->, thick, black!55] (15.18,6.75) -- (15.6,6.75);
    \node[font=\Large, text=cLLM!80!black] (llm) at (15.95,6.75) {\faRobot};
    \draw[->, thick, black!55] (16.3,6.75) -- (16.72,6.75);
    \node[font=\large] at (17.05,6.75) {\textcolor{cOurs!80!black}{\faCheckCircle}};
    \node[obj] at (16.45,7.25) {agent answers};
    \node[obj] at (16.45,6.25) {judged answer};
    \draw[decorate, decoration={brace, mirror, amplitude=3pt}, cRec!85!black] (2.8,5.66) -- (9.33,5.66) node[midway, below=2pt, term, align=center] {pool completeness $\rpool$\\[-1pt]\textmd{the pool keeps all evidence}};
    \draw[decorate, decoration={brace, mirror, amplitude=3pt}, cRec!85!black] (9.42,5.66) -- (12.48,5.66) node[midway, below=2pt, term, align=center] {ranking completeness $\rrank$\\[-1pt]\textmd{the slate keeps pooled evidence}};
    \draw[decorate, decoration={brace, mirror, amplitude=3pt}, cRec!85!black] (12.57,5.66) -- (15.13,5.66) node[midway, below=2pt, term, align=center] {budget fill\\[-1pt]\textmd{ranked turns that fit}};
    \draw[decorate, decoration={brace, mirror, amplitude=3pt}, cRec!85!black] (15.3,5.66) -- (17.5,5.66) node[midway, below=2pt, term, align=center] {answer accuracy\\[-1pt]\textmd{judged per slate}};
    \node[rotate=90, font=\scriptsize\bfseries, text=black!60] at (0.18,2.45) {(b) design space};
    \node[hd, anchor=west] at (0.5,4.82) {System};
    \node[hd] at (\xP,4.82) {memory construction $\varphi$};
    \node[hd] at (\xG,4.82) {memory recall $g$};
    \node[hd] at (\xF,4.82) {memory ranking $f$};
    \node[hd] at (\xH,4.82) {context assembly $h$};
    \node[hd] at (\xS,4.82) {LLM calls, by stage};
    \begin{scope}[on background layer]
      \fill[black!4, rounded corners=2pt] (15.32,1.06) rectangle (17.58,5.0);
    \end{scope}
    \draw[black!35, densely dashed] (15.24,1.06) -- (15.24,5.0);
    \draw[black!45] (0.5,4.62) -- (17.55,4.62);
    \node[grp] at (0.5,4.43) {LLM-built memories};
    \def\y{4.15} \node[sysname] at (0.55,\y) {LongMemEval design\,\citep{wu2025longmemeval}};
      \C[llm]{\xP}{\wP}{\I{\faKey}fact-augmented keys\R} \C{\xG}{\wG}{dense over keys} \C{\xF}{\wF}{similarity} \C{\xH}{\wH}{fill, whole rounds} \C[llm]{\xS}{\wS}{at $\varphi$\R}
    \def\y{3.73} \node[sysname] at (0.55,\y) {HippoRAG~2\,\citep{gutierrez2025hipporag2}};
      \C[llm]{\xP}{\wP}{\I{\faProjectDiagram}fact graph\R} \C[llm]{\xG}{\wG}{dense facts, LLM filter\R} \C{\xF}{\wF}{PageRank on the graph} \C{\xH}{\wH}{fill, whole rounds} \C[llm]{\xS}{\wS}{at $\varphi$ and $g$\R}
    \def\y{3.31} \node[sysname] at (0.55,\y) {A-Mem\,\citep{xu2025amem}};
      \C[llm]{\xP}{\wP}{\I{\faStickyNote}linked notes\R} \C[llm]{\xG}{\wG}{note retrieval\R} \C{\xF}{\wF}{similarity} \C{\xH}{\wH}{fill} \C[llm]{\xS}{\wS}{at $\varphi$ and $g$\R}
    \def\y{2.89} \node[sysname] at (0.55,\y) {T-Mem\,\citep{guo2026tmem}};
      \C[llm]{\xP}{\wP}{\I{\faLayerGroup}scenes, items, persona\R} \C{\xG}{\wG}{dense over scenes, items} \C{\xF}{\wF}{cross-encoder} \C{\xH}{\wH}{5 scenes, 15 items} \C[llm]{\xS}{\wS}{at $\varphi$\R}
    \node[grp] at (0.5,2.57) {context selectors};
    \def\y{2.29} \node[sysname] at (0.55,\y) {Adaptive-k\,\citep{taguchi2025adaptivek}};
      \C{\xP}{\wP}{\I{\faComments}turns} \C{\xG}{\wG}{dense over the history} \C{\xF}{\wF}{similarity} \C{\xH}{\wH}{largest-gap cutoff} \C{\xS}{\wS}{none}
    \def\y{1.87} \node[sysname] at (0.55,\y) {DynamicRAG\,\citep{sun2025dynamicrag}};
      \C{\xP}{\wP}{\I{\faComments}turns} \C{\xG}{\wG}{fused top 40} \C[llm]{\xF}{\wF}{LLM selector\R} \C{\xH}{\wH}{cutoff in the same call} \C[llm]{\xS}{\wS}{at $f$\R}
    \node[grp] at (0.5,1.55) {LLM reranker};
    \def\y{1.27} \node[sysname] at (0.55,\y) {Ref};
      \C{\xP}{\wP}{\I{\faComments}turns} \C{\xG}{\wG}{lexical + dense, fused, 64} \C[llm]{\xF}{\wF}{LLM scores, 64 calls\R} \C{\xH}{\wH}{fill} \C[llm]{\xS}{\wS}{at $f$: 64 calls\R}
    \draw[cOurs!70, fill=cOurs!5, rounded corners=2.5pt] (0.42,0.42) rectangle (17.6,0.98);
    \def\y{0.7} \node[anchor=west, font=\scriptsize\bfseries] at (0.5,\y) {\method{}};
      \C[ours]{\xP}{\wP}{\I{\faComments}turns, optional records} \C[ours]{\xG}{\wG}{lexical + dense, fused, 128} \C[ours]{\xF}{\wF}{learned, cheap features} \C[ours]{\xH}{\wH}{greedy fill to $B$} \C[ours]{\xS}{\wS}{none}
  \end{tikzpicture}
  \caption{Long-term agent memory as budgeted slate recommendation. (a)~Memory construction, recall, ranking and context assembly narrow the stored turns of a long-term memory to the agent's context under a token budget $B$, and each is a recommender stage (orange). (b)~The compared systems by their choice at each stage, where \textcolor{cLLM!70!black}{\faRobot} marks an LLM call and the last column summarizes the stages that call an LLM. \method{} learns its ranker over cheap features and makes no LLM call at any stage.}
  \Description{Two panels share four stage columns: write-time map, candidate generation, ranker and slate builder. Panel a shows a funnel that narrows from left to right. A request and a history enter it. The first segment holds the dated sessions of turns, among them two gold evidence units and some optional records. The second holds a smaller pool that still contains both gold units. The third holds a ranking with the gold units on top, and the fourth a slate under the token budget B. An answering LLM reads the slate, and the judged answer is the reward. Braces under the funnel state that the first two stages decide whether the pool keeps all evidence and that the ranker decides whether the slate keeps the pooled evidence. The slate builder fills the budget with the ranked turns that fit, and the judged answer gives the accuracy. Panel b lists the compared systems in three groups: LLM-built memories, context selectors and an LLM reranker. Each row has one chip per stage and a last column, set apart by a dashed line, that summarizes the stages at which the system calls an LLM, none for Adaptive-k and AgentMemRec. LongMemEval design, HippoRAG 2, A-Mem and T-Mem spend LLM calls on the write-time map, and HippoRAG 2 and A-Mem also at candidate generation. Adaptive-k and DynamicRAG use per-question cutoffs, and DynamicRAG and Ref call an LLM in the ranker. AgentMemRec uses turns with optional records, a fused lexical and dense pool of 128 units, a ranker learned over cheap features and a greedy fill, with no LLM call.}
  \label{fig:pipeline}
\end{figure*}

\subsection{The design space}
\label{sec:space}
Recommenders meet these three properties with a cascade, and we use the cascade to define the space of memory designs.

\begin{definition}[Design space]
\label{def:space}
A memory design is a tuple $d=(\varphi,g,f,h)$.
The write-time map $\varphi$ turns a history into a catalogue of units, possibly with side-information.
The candidate generator $g$ returns a pool $\Pool(q)\subseteq\varphi(\Hay(q))$, the ranker $f$ orders the pool, and the slate builder $h$ selects $\Slate_B(q)\subseteq\Pool(q)$ within the budget.
The LLM cost of $d$ counts the LLM calls of $\varphi$ per history and of $g$, $f$ and $h$ per request.
\end{definition}
Every memory system we compare is a point in this space (Figure~\ref{fig:pipeline}b), and the points differ mainly in where they spend LLM calls.

\subsection{Three results about the space}
\label{sec:props}
The benchmark marks a gold set $\Gold(q)\subseteq\Hay(q)$ of turns that contain the answer, and a slate is complete when the turns that its units render or cite cover $\Gold(q)$, written $\Gold(q)\subseteq\Slate_B(q)$.
For a policy, let $\rpool=\Pr[\Gold\subseteq\Pool]$ be its pool completeness, $\rrank=\Pr[\Gold\subseteq\Slate_B\mid\Gold\subseteq\Pool]$ its ranking completeness and $\rslate=\Pr[\Gold\subseteq\Slate_B]$ its slate completeness.
Let $\acomp$ and $\ainc$ be its mean reward on complete and on incomplete slates.

\begin{proposition}[Stage factorization]
\label{prop:factor}
For every cascade with $\Slate_B(q)\subseteq\Pool(q)$,
\(
  \Acc(\pi)=\ainc+(\acomp-\ainc)\,\rpool\,\rrank .
\)
\end{proposition}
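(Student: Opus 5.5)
The plan is to condition on the binary event that the slate is complete and then rewrite its probability as a product through the containment $\Slate_B(q)\subseteq\Pool(q)$. I read $\Acc(\pi)$ as the expectation of $u(\Slate_B(q),q)$ over the joint distribution of questions, perturbed budgets and decoding, as in Section~\ref{sec:formal}. I take $\acomp=\mathbb{E}[u\mid \Gold\subseteq\Slate_B]$ and $\ainc=\mathbb{E}[u\mid \Gold\not\subseteq\Slate_B]$, all under that same distribution. If either conditioning event has probability zero, its conditional mean can be set to any value, because it enters the formula with weight zero.

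\textbf{Step 1 (total expectation).} Let $E=\{\Gold\subseteq\Slate_B\}$. By the law of total expectation, $\Acc(\pi)=\acomp\Pr[E]+\ainc(1-\Pr[E])=\ainc+(\acomp-\ainc)\,\rslate$. This step needs no assumption beyond $u\in[0,1]$, which makes the conditional expectations well defined.

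\textbf{Step 2 (chain through the pool).} Since $\Slate_B(q)\subseteq\Pool(q)$ for every request, the event $E$ implies $\{\Gold\subseteq\Pool\}$. Indeed, if the gold turns are covered by the slate's units, they are covered by the pool's units, since completeness is defined through rendered or cited turns and that coverage is monotone under inclusion of unit sets. Hence $E=E\cap\{\Gold\subseteq\Pool\}$. The chain rule then gives $\rslate=\Pr[\Gold\subseteq\Pool]\cdot\Pr[E\mid\Gold\subseteq\Pool]=\rpool\,\rrank$. If $\rpool=0$, then $\rslate=0$ and the product is $0$ under any convention for $\rrank$. Substituting into Step 1 yields the statement.

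\textbf{Main obstacle.} The only step that is not bookkeeping is the implication in Step 2, that slate completeness implies pool completeness. This requires the notion of coverage for records that cite turns to be monotone in the set of units. I would state this explicitly: the covered-turn set of a union of units is the union of their covered-turn sets. I would also note that all probabilities and conditional means must be taken under one common distribution, so that the identity is exact rather than an approximation.
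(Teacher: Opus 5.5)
Your proposal is correct and follows the paper's proof. The paper applies total expectation over the completeness event and then uses $\Slate_B\subseteq\Pool$ to write $\Pr[\Gold\subseteq\Slate_B]=\rpool\,\rrank$. Your added remarks on zero-probability events, monotone coverage for cited records and a common distribution make explicit what the paper leaves implicit.
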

\begin{proof}
By total expectation over the event $C=\{\Gold\subseteq\Slate_B\}$, $\Acc=\acomp\Pr[C]+\ainc(1-\Pr[C])$.
Since $\Slate_B\subseteq\Pool$, the event $C$ implies $\Gold\subseteq\Pool$, so $\Pr[C]=\rpool\,\rrank$.
\end{proof}
Proposition~\ref{prop:factor} separates the item-level stages from set-level effects.
Candidate generation sets $\rpool$, which bounds every ranker since $\rslate\le\rpool$, learning to rank sets $\rrank$, and the gap $\acomp-\ainc$ scales both gains.
Set-level effects, such as a misleading unit, act through $\acomp$ and $\ainc$, which also shift with which questions a policy completes and with how much gold an incomplete slate holds.

\begin{proposition}[Reduction to item-wise ranking]
\label{prop:itemwise}
Fix a request $q$, and suppose that the mean rewards on complete and on incomplete slates, $\acomp(q)>\ainc(q)$, do not depend on the slate, that unit costs are additive, and that under the selector's belief each unit $v$ belongs to $\Gold(q)$ independently with probability $0<p_v<1$.
Then maximizing the expected reward under the budget is a 0-1 knapsack with item values $-\log(1-p_v)$, and with equal unit costs the optimal slate holds the units with the largest $p_v$.
\end{proposition}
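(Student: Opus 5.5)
The plan is to reduce the expected reward of a fixed slate to a monotone function of one product, and then take logarithms to make the objective additive. Fix $q$ and a feasible slate $\Slate$ with $c(\Slate)=\sum_{v\in\Slate}c_v\le B$. Since $\acomp(q)$ and $\ainc(q)$ do not depend on the slate, conditioning on the event $C=\{\Gold(q)\subseteq\Slate\}$ gives, as in the proof of Proposition~\ref{prop:factor},
\[
\mathbb{E}[u(\Slate,q)]=\ainc(q)+\bigl(\acomp(q)-\ainc(q)\bigr)\Pr[\Gold(q)\subseteq\Slate].
\]
Because $\acomp(q)>\ainc(q)$, maximizing the expected reward over feasible slates is equivalent to maximizing $\Pr[\Gold(q)\subseteq\Slate]$.

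The next step is to compute this probability under the independence belief. The event $\Gold(q)\subseteq\Slate$ is exactly the event that no unit outside $\Slate$ is gold. By independence, this gives
\[
\Pr[\Gold(q)\subseteq\Slate]=\prod_{v\in I(x)\setminus\Slate}(1-p_v).
\]
Since $0<p_v<1$, every factor is positive, so the logarithm is finite and strictly increasing. The log-probability equals $\sum_{v\notin\Slate}\log(1-p_v)$. We rewrite it as $L+\sum_{v\in\Slate}w_v$, where $L=\sum_{v\in I(x)}\log(1-p_v)$ is a constant independent of $\Slate$ and $w_v=-\log(1-p_v)>0$. Maximizing the expected reward is therefore the same as maximizing $\sum_{v\in\Slate}w_v$ subject to $\sum_{v\in\Slate}c_v\le B$. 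This is by definition a 0-1 knapsack with values $w_v$ and weights $c_v$.

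For equal unit costs $c_v=c$, the constraint becomes $|\Slate|\le k:=\lfloor B/c\rfloor$. Since every $w_v>0$, an optimal slate uses all $k$ slots, or the whole catalogue if it is smaller than $k$. A sum of $k$ values is maximized by the $k$ largest values. Because $w_v$ is strictly increasing in $p_v$, these are the $k$ units with the largest $p_v$, with ties broken arbitrarily. An exchange argument makes this precise: swapping a chosen unit for an unchosen one with larger $p_v$ never decreases the objective.

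The argument is mostly routine. The step needing the most care is the product formula: one must note that completeness depends only on the units left out, so the complement product is the right quantity. One must also check that the strict bounds on $p_v$ keep the logarithm finite and the values positive. Positivity is what guarantees that the slate is filled to capacity.
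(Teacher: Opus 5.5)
Your proposal is correct and follows essentially the same route as the paper: total expectation over the completeness event, the product $\prod_{v\notin\Slate}(1-p_v)$ by independence, and a logarithm that turns the objective into a knapsack with values $-\log(1-p_v)$. The paper also conditions on $\Gold(q)\neq\emptyset$, which maps $\Pr[\Gold\subseteq\Slate]$ to $(\Pr[\Gold\subseteq\Slate]-P_0)/(1-P_0)$ with $P_0=\prod_v(1-p_v)$ and keeps the maximizer; your proof does not need this step for the statement as written, and your equal-cost argument (fill all $\lfloor B/c\rfloor$ slots because every value is positive, then exchange) is more explicit than the paper's one-line conclusion.
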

\begin{proof}
Applied given $q$, the total-expectation step of Proposition~\ref{prop:factor} gives the expected reward $\ainc(q)+(\acomp(q)-\ainc(q))P_{\Slate}$ with $P_{\Slate}=\Pr[\Gold\subseteq\Slate\mid q]=\prod_{v\notin\Slate}(1-p_v)$ by independence.
Conditioning on $\Gold\neq\emptyset$ maps $P_{\Slate}$ to $(P_{\Slate}-P_0)/(1-P_0)$ with $P_0=\prod_{v}(1-p_v)$, an increasing affine map that keeps the maximizer.
Taking logarithms, the slate maximizes $\sum_{v\in\Slate}-\log(1-p_v)$ under the budget, and with equal costs this sum is largest for the units with the largest $p_v$.
\end{proof}
Proposition~\ref{prop:itemwise} extends the probability ranking principle \citep{robertson1977prp} from single documents to complete evidence sets under a budget.
It is the reason a learned ranker of item relevance, followed by a fill, fits a set-level objective.
Rendered cost is monotone and submodular, since a session header is paid once, and with unequal costs the knapsack favours value per token, which the ranker sees through its length feature.
The proposition also names the conditions under which item-wise ranking stops being enough: rewards that depend on the composition of the slate, and evidence units whose relevance is correlated.

\begin{proposition}[Fill and cutoffs]
\label{prop:fill}
Let $\sigma$ rank the pool, and let $\Slate^{\sigma}_B$ be the greedy fill that walks $\sigma$ and adds each unit whose rendered cost still fits.
Every prefix $\sigma_{\le j}$ with $c(\sigma_{\le j})\le B$ satisfies $\sigma_{\le j}\subseteq\Slate^{\sigma}_B$.
Hence, if $u(\cdot,q)$ does not decrease on the sets between $\sigma_{\le j}$ and $\Slate^{\sigma}_B$, the fill is at least as good as the cutoff $\sigma_{\le j}$.
\end{proposition}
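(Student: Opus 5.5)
The plan is to prove the containment $\sigma_{\le j}\subseteq\Slate^{\sigma}_B$ by induction along the walk, and then get the comparison from monotonicity along a chain.

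Fix a prefix $\sigma_{\le j}=\{\sigma_1,\dots,\sigma_j\}$ with $c(\sigma_{\le j})\le B$. Let $F_i$ be the fill's set after it has processed $\sigma_1,\dots,\sigma_i$, with $F_0=\emptyset$. The inductive claim is $F_i=\sigma_{\le i}$ for every $i\le j$.

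The base case $i=0$ is immediate. For the step, suppose $F_{i-1}=\sigma_{\le i-1}$ with $i\le j$. The fill adds $\sigma_i$ exactly when $c(F_{i-1}\cup\{\sigma_i\})=c(\sigma_{\le i})\le B$. Since $\sigma_{\le i}\subseteq\sigma_{\le j}$ and $c$ is a monotone set cost (Definition~\ref{def:bsr}), we have $c(\sigma_{\le i})\le c(\sigma_{\le j})\le B$. So $\sigma_i$ is added and $F_i=\sigma_{\le i}$.

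The fill never removes units, so $F_j\subseteq\Slate^{\sigma}_B$. This gives the first claim. The argument needs only monotonicity of $c$, not additivity; the shared session-header cost noted after Proposition~\ref{prop:itemwise} is therefore harmless.

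The main point to watch is that the fill's test is on the cost of the current set plus the candidate, not on the candidate's individual cost. The induction keeps the current set equal to the prefix, which is what makes the monotonicity comparison valid. Past index $j$ the fill may skip units that do not fit and add later, cheaper ones. This does not affect the containment, since those steps only enlarge the set.

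For the second claim, read ``does not decrease on the sets between'' as: $u(A,q)\le u(A',q)$ whenever $\sigma_{\le j}\subseteq A\subseteq A'\subseteq\Slate^{\sigma}_B$. If only chains are intended, it suffices to use the chain $F_j\subseteq F_{j+1}\subseteq\cdots\subseteq\Slate^{\sigma}_B$ produced by the walk, where each $F_{i+1}$ is either $F_i$ or $F_i$ plus one unit.

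Applying monotonicity along this chain gives $u(\sigma_{\le j},q)=u(F_j,q)\le u(\Slate^{\sigma}_B,q)$. This holds pointwise for each $q$, and the fill is feasible because every accepted step keeps the cost at most $B$. Taking expectations over questions then shows that the fill is at least as good as the cutoff in the objective~\eqref{eq:objective}.
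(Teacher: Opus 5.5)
Your proof is correct and follows the paper's argument. It runs an induction along the walk, showing that each $\sigma_i$ with $i\le j$ is accepted because $c(\sigma_{\le i})\le c(\sigma_{\le j})\le B$ by monotonicity of $c$, and then uses the assumed monotonicity of $u(\cdot,q)$ to compare the cutoff with the fill; your invariant $F_i=\sigma_{\le i}$ simply makes explicit what the paper's induction uses implicitly, namely that the fit test at $\sigma_i$ is applied to exactly the set $\sigma_{\le i}$.
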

\begin{proof}
Let $i\le j$. By induction, $\sigma_1,\dots,\sigma_{i-1}$ are in the slate when the walk reaches $\sigma_i$.
The cost of $\sigma_{\le i}$ is at most $c(\sigma_{\le j})\le B$ because $c$ is monotone, so $\sigma_i$ is added.
The assumed monotonicity then gives $u(\sigma_{\le j},q)\le u(\Slate^{\sigma}_B,q)$.
\end{proof}
The fill need not be a prefix of $\sigma$, since it skips units that do not fit.
A per-question cutoff can therefore help only where some unit lowers the value of the slate, that is, through a negative externality.
Proposition~\ref{prop:fill} is the reason \method{} fills a fixed budget, and its condition is testable along learned rankings.

\subsection{The design problem}
\label{sec:design}
Together, the three results reduce the selection objective to a problem over the first three coordinates.
Suppose that the rewards on complete and incomplete slates do not depend on the slate and that evidence units are independent.
If the reward also does not fall as the fill extends a prefix, the slate builder should fill the budget.
A design should then solve
\begin{equation}
  \max_{\varphi,g,f}\ \rpool(\varphi,g)\;\rrank(f\mid\varphi,g)\quad\text{s.t.}\quad \mathrm{LLMcost}(d)\le L ,
  \label{eq:design}
\end{equation}
where $L$ is the LLM budget of the deployment.
Equation~\eqref{eq:design} joins the stages into one problem.
The pool couples the two factors, because a wider pool raises $\rpool$ and gives the ranker more distractors.
The constraint prices every coordinate, since an LLM call pays off only where it raises the product.
Equation~\eqref{eq:design} is a design objective, exact only under these conditions, and how well it orders designs is an empirical question.

Figure~\ref{fig:pipeline}b reads the compared systems against Equation~\eqref{eq:design}.
LLM-built memories spend hundreds to thousands of calls per history on $\varphi$, which the product credits only through the completeness that recall and ranking then reach.
Context selectors replace the fill with a per-question cutoff, which cannot beat the fill along a ranking whose reward is monotone, since the fill contains every prefix that fits.
The LLM reranker spends 64 calls per request on $f$, the coordinate that sets $\rrank$.
\method{} instantiates Equation~\eqref{eq:design} at $L=0$.
It raises $\rpool$ with a wide pool from two cheap channels and $\rrank$ with a ranker learned from relevance labels over cheap features, sizes the pool by the product, and fills the budget.
The recommender view thus prices every coordinate in LLM calls and predicts that per-question cutoffs cannot beat a fill when the reward does not fall along it. Its stage factorization locates the accuracy of any design in its stages.

\section{\method{}}
\label{sec:method}

\method{} instantiates the design problem of Equation~\eqref{eq:design} with an LLM budget of zero (Figure~\ref{fig:pipeline}).
Candidate generation raises $\rpool$, learning to rank raises $\rrank$, the pool size is chosen by their product, and the slate fills the budget instead of cutting the ranking.
No stage calls an LLM at selection time.

\subsection{Candidate generation: raising \texorpdfstring{$\rpool$}{pool completeness}}
\label{sec:recall}
In Equation~\eqref{eq:design}, $\rpool$ bounds the product, so the pool must be wide, and with $L=0$ it must be cheap to build.
Like a recommender, \method{} narrows the catalogue with cheap retrieval channels.
The lexical channel ranks all turns by BM25 \citep{robertson2009bm25} over lower-cased word tokens.
The dense channel ranks them by the cosine similarity of bge-m3 embeddings \citep{chen2024bgem3}.
Reciprocal rank fusion \citep[RRF;][]{cormack2009rrf} with $k=60$ merges the two lists, and the pool $\Pool(q)$ is the top $N=128$ fused turns.
The pool size was chosen on development questions by slate completeness after ranking, the product $\rpool\,\rrank$, where 128 turns led with 0.912 against 0.879 for 64 turns, in-sample for a ranker trained on 64-turn pools.

\subsection{Learning to rank: raising \texorpdfstring{$\rrank$}{ranking completeness}}
\label{sec:rank}
When evidence units are independent, the set-level objective reduces to ranking units by their probability of being evidence.
The ranker therefore learns item relevance from gold labels, and with $L=0$ it uses cheap features, as recommender cascades do.
\method{} uses a LightGBM model \citep{ke2017lightgbm} trained with the LambdaRank objective \citep{burges2010ranknet,wu2010adapting}.
Each (question, pooled turn) pair has ten features.
Five are cheap relevance signals: the bge-m3 cosine, the logarithms of the global dense and BM25 ranks, the position in the fused pool and the bge-reranker-v2-m3 cross-encoder score.
Five describe the turn: whether its role is user, its relative position in its session, the logarithms of its age in days and of its length, and the spaCy keys it shares with the question.
The labels are the gold annotations of the 152 development questions, with pools from both their LongMemEval-S and LongMemEval-M histories, 19,456 pairs in total.

\subsection{Slate construction: the fill}
\label{sec:slate}
The slate stage fills the budget $B$ from the ranked pool.
It walks the pool in rank order and adds each turn whose rendered cost still fits.
The rendered cost of a turn is the increase in the token count of the rendered slate when the turn is added, including its session header if no earlier turn of that session was selected.
A turn that does not fit is skipped, and the walk continues.
The rendering groups the selected turns by session, orders sessions by date and uses the official LongMemEval history format (Appendix~\ref{app:protocol}).
The budget is fixed at $B=2{,}700$ tokens, a value set in our first experiment before any system was compared, and the evaluation perturbs it to 2,400 to 2,700.
There is no per-question cutoff, because the fill contains every prefix that fits, so a cutoff can only help where a lower-ranked turn lowers the value of the slate.
On the ranking of \method{}, choosing the best of three budgets per question gains no more than decode noise.

\subsection{Optional write-time records}
\label{sec:m4}
This optional extension adds item side-information at write time to raise $\rpool$ and $\acomp$, with no LLM call at selection time.
An LLM (Qwen3-32B) reads each session once and writes typed records with a date, entities and the turns they cite.
A record channel adds to the 128-turn pool the turns cited by the records most similar to the question, up to 32 new turns, and the frozen ranker orders the union.
Relative-time tags attach the dates of citing records to turns that contain a relative time expression, keeping only dates up to the reference date of the question.

\section{Experimental Setup}
\label{sec:setup}
\begin{table*}[t]
  \centering\small
  \caption{Main results on the test questions of LongMemEval-S and LongMemEval-M with evidence in at least two sessions ($\kappa\ge2$, $n=209$ each). Gains are paired margins. LLM calls are per question, at write time (Write), amortized over the histories each build covers, and at selection time (Sel.). Tokens are mean slate lengths. Best in bold, second underlined.}
  \label{tab:s}
  \setlength{\tabcolsep}{5.3pt}
  \begin{tabular}[t]{lrrrcc}
    \toprule
    \multicolumn{6}{c}{\textbf{LongMemEval-S} (50 sessions, 100k tokens per history)} \\
    \midrule
    & \multicolumn{2}{c}{LLM calls $\downarrow$} & & \multicolumn{2}{c}{Accuracy (\%) $\uparrow$} \\
    \cmidrule(lr){2-3}\cmidrule(lr){5-6}
    System & Write & Sel. & Tokens & Qwen & DeepSeek \\
    \midrule
    Adaptive-k \citep{taguchi2025adaptivek} & 0 & 0 & 1,470 & 69.1 & 76.6 \\
    DynamicRAG-8B \citep{sun2025dynamicrag} & 0 & 1 & 2,140 & 67.0 & 74.2 \\
    HippoRAG~2 \citep{gutierrez2025hipporag2} & 442 & 1 & 2,520 & 59.9 & 63.2 \\
    A-Mem \citep{xu2025amem} & 979 & 1 & 2,540 & 54.4 & 58.1 \\
    T-Mem \citep{guo2026tmem} & 2,268 & 0 & 3,310 & 70.3 & 73.2 \\
    Ref (LLM reranker) & 0 & 64 & 2,530 & \underline{79.8} & \underline{89.1} \\
    \midrule
    \textbf{\method{}} & \textbf{0} & \textbf{0} & 2,540 & \textbf{80.7} & \textbf{89.4} \\
    \multicolumn{4}{l}{\quad\textit{gain over the strongest memory or selector}} & \gain{+10.4} & \gain{+12.8} \\
    \bottomrule
  \end{tabular}\hfill\setlength{\tabcolsep}{4.4pt}\begin{tabular}[t]{lrrrcc}
    \toprule
    \multicolumn{6}{c}{\textbf{LongMemEval-M} (500 sessions, 1.5M tokens per history)} \\
    \midrule
    & \multicolumn{2}{c}{LLM calls $\downarrow$} & & \multicolumn{2}{c}{Accuracy (\%) $\uparrow$} \\
    \cmidrule(lr){2-3}\cmidrule(lr){5-6}
    System & Write & Sel. & Tokens & Qwen & DeepSeek \\
    \midrule
    Adaptive-k \citep{taguchi2025adaptivek} & 0 & 0 & 1,180 & 57.8 & 60.8 \\
    DynamicRAG-8B \citep{sun2025dynamicrag} & 0 & 1 & 2,150 & 59.3 & 62.1 \\
    LongMemEval design \citep{wu2025longmemeval} & 503 & 0 & 2,530 & 56.3 & 60.3 \\
    HippoRAG~2 \citep{gutierrez2025hipporag2} & 1,025 & 1 & 2,520 & 46.1 & 48.1 \\
    Ref (LLM reranker) & 0 & 64 & 2,550 & \underline{66.6} & \underline{76.2} \\
    \midrule
    \textbf{\method{}} & \textbf{0} & \textbf{0} & 2,550 & \textbf{70.0} & \textbf{81.3} \\
    \multicolumn{4}{l}{\quad\textit{gain over the strongest memory or selector}} & \gain{+10.6} & \gain{+19.3} \\
    \bottomrule
  \end{tabular}
\end{table*}

\paragraph{Benchmarks and splits.}
Like most memory papers, we evaluate on LongMemEval-S \citep{wu2025longmemeval} and LoCoMo \citep{maharana2024locomo}.
LongMemEval-S (cleaned release) has 500 questions over histories of about 50 sessions and 100k tokens.
LongMemEval-M shares its questions over histories of about 500 sessions and 1.5M tokens, and serves as a harder third setting.
We exclude the 30 questions of the unanswerable type and compute $\kappa$ over the other 470.
A split fixed before any method run assigns 152 of the 500 questions of both variants to development and 348 to test, of which 328 are answerable.
The primary stratum is $\kappa\ge2$ on test, with $n=209$, and we also report $\kappa\ge3$ (49 test questions).
Label-only and oracle measurements have no learned component and use all 300 questions with $\kappa\ge2$.
LoCoMo has 1,540 scored questions in 10 conversations.
As the field does, we report all of them, and we add paired comparisons on test conversations 3 to 9, with conversations 0 to 2 for development.

\paragraph{Answering and judging.}
On LongMemEval, two answering models read every slate: Qwen3-32B \citep{yang2025qwen3} with FP8 weights under vLLM, and DeepSeek~V4.1~Flash \citep{deepseekai2026v41flash} through one pinned FP8 provider.
Both decode at temperature 0 with reasoning disabled and the official step-by-step answer prompt of LongMemEval.
Final labels come from the official GPT-4o judge with its per-type prompts.
A Qwen3-32B judge with the same prompts agrees with GPT-4o on 94\% to 98\% of answers and changes one primary label.
On LoCoMo, we use the answering model, prompt and judge of T-Mem: GPT-4o-mini answers, and the GPT-4o-mini judge is run three times per answer.
Appendix~\ref{app:protocol} gives the serving and judging details.

\paragraph{Implementation details.}
Single decodes are unstable: changing only the lowest-ranked 100 to 300 tokens of a 2,700-token slate changes the judged answer on 91 of 300 multi-session questions for Qwen and 43 for DeepSeek.
The variation comes from the answering model, because re-judging identical answers agrees on 99.9\% of them.
We therefore build $\Kj=4$ slates per question and system, with budgets $B\in\{2{,}400,2{,}500,2{,}600,2{,}700\}$ in the system's own order, and report the mean judge label.
Identical prompts are answered once, so Adaptive-k and DynamicRAG, whose slates often stop short of the budget, reuse one decode for all four budgets on 86\% and 52\% of questions. T-Mem assembles one context per question and uses one decode.
Systems are compared by a paired percentile bootstrap over questions with 10,000 draws at the 5\% level.
Labels are fixed before each run, and a difference is a \emph{gain} when it is significantly above zero and a \emph{loss} when it is significantly below zero.
It is at \emph{parity} when it is not significant and is significantly above $-3$ points in a one-sided test at the 2.5\% level, and \emph{underpowered} otherwise.
Parity is therefore non-inferiority at a 3-point margin with no significant difference.
In Table~\ref{tab:s}, every margin of \method{} over a memory or context-selection method is a gain, and every comparison with Ref is parity except a gain on LongMemEval-M with DeepSeek.
The 3-point margin is the spread of one system's accuracy across four near-identical budgets on one backend, 2.6 and 3.3 points.
On LongMemEval-M, this averaged protocol replicates a single-decode test (Appendix~\ref{app:lme-m}).
All systems in the headline tables are answered on one backend per answering model.

\paragraph{Compared systems.}
We compare against three families at the shared budget: context selectors, LLM-built memories and an LLM reranker.
The context selectors are Adaptive-k \citep{taguchi2025adaptivek}, which applies its official largest-gap rule to bge-m3 similarities over the whole history, and DynamicRAG \citep{sun2025dynamicrag}, whose 8B selector chooses among the top 40 fused turns.
The LLM-built memories are the LongMemEval memory design \citep{wu2025longmemeval}, HippoRAG~2 \citep{gutierrez2025hipporag2}, A-Mem \citep{xu2025amem} and T-Mem \citep{guo2026tmem}.
They run with Qwen3-32B as the builder LLM, so the comparison is between memory designs under a shared builder.
The LongMemEval design was run on LongMemEval-M only.
T-Mem, the strongest system we found on LoCoMo, is the main comparison, reproduced on LoCoMo and built by us on LongMemEval-S.
The LLM reranker (\emph{Ref}) scores each turn of the 64-turn fused pool from 0 to 10 with Qwen3-8B and fills the budget in score order.
Appendix~\ref{app:baselines} lists every deviation from the official settings, and the appendix documents all implementation details, including seed selection.

\section{Results}
\label{sec:results}

The experiments test the recommender formulation through three research questions.
RQ1 asks whether selecting memories the way a recommender selects items outperforms memory systems that build structure with LLM calls.
RQ2 asks which terms of the stage factorization carry the gain, and RQ3 asks what the formulation costs.

\subsection{RQ1: Recommendation against LLM-built memory}
\label{sec:res-s}

On LongMemEval-S, \method{} is the most accurate system for both answering models, and it outperforms Adaptive-k, DynamicRAG, HippoRAG~2, A-Mem and T-Mem by 10.4 to 31.2 points, all gains (Table~\ref{tab:s}, left).
The margins are largest over HippoRAG~2 and A-Mem, whose slates hold all evidence for 0.50 of the question and budget pairs, against 0.94 for \method{}, and the factorization attributes most of the margins to this difference.
A recognition-filter prompt adapted to Qwen lowers the share of questions on which HippoRAG~2 keeps no fact from 36.8\% to 23.9\% and leaves its accuracy unchanged ($+0.4$ and $+1.1$), and \method{} still leads by 20.5 and 25.1 points.
T-Mem, the strongest LLM-built memory, calls an LLM about 2,270 times per history and trails by 10.4 and 16.1 points.
At its context length of 3,310 tokens, \method{} still leads by 8.6 and 15.4 points, both gains, so the gap is not explained by the shorter slate of \method{}.

Against Ref, which makes 64 LLM calls per query, \method{} is at parity for both answering models ($+1.0$ and $+0.2$).
At $\kappa\ge3$ ($n=49$), the margins over Adaptive-k, DynamicRAG, HippoRAG~2 and A-Mem grow to 23.0 to 44.9 points, and \method{} is the most accurate system for both answering models, with 68.4 and 77.6 (Figure~\ref{fig:cost}b).
At $\kappa=2$, its margins over every memory and context-selection method are 6.1 to 27.0 points, all gains.
From $\kappa=2$ to $\kappa\ge3$, the two rankers lose 15 to 17 points, whereas the memory and context-selection methods lose 17 to 40.
Scattered evidence penalizes selectors that miss a session, since a slate without a gold turn loses 37 to 64 points of accuracy with Qwen.
Over all 328 answerable test questions, it exceeds Adaptive-k, DynamicRAG and HippoRAG~2 by 7.9 to 17.8 points and stays at parity with Ref.
These margins come from a ranker trained on development labels, and a label-free ranker distilled from LLM scores also exceeds every memory and context-selection method.
The margins do not rest on sessions shared with training.
On the 126 test questions whose gold sessions appear in no development pool, \method{} exceeds Adaptive-k, DynamicRAG, HippoRAG~2 and A-Mem by 8.3 to 25.2 points, all gains.

\begin{figure*}[t]
  \centering
  \pgfplotsset{
    costax/.style={
      height=4.9cm, scale only axis=false,
      tick label style={font=\scriptsize}, label style={font=\footnotesize}, title style={font=\footnotesize, yshift=-3pt},
      ymajorgrids, grid style={black!8}, axis line style={black!45}, tick style={black!45},
      ylabel={Accuracy (\%)}, ylabel style={yshift=-4pt}, clip=false,
    },
    sysq/.style={only marks, mark=*, mark size=1.7pt, #1, fill=#1},
    sysd/.style={only marks, mark=*, mark size=1.7pt, #1, fill=white, thick},
    seg/.style={#1, line width=0.9pt},
  }
  \tikzset{lab/.style={font=\scriptsize, inner sep=1pt, #1}}
  \begin{tikzpicture}
    \begin{axis}[costax, width=0.47\textwidth,
        title={(a) LLM calls and accuracy at $B=2{,}700$},
        xmin=-1.05, xmax=4.15, ymin=50, ymax=94,
        xtick={0,0.30103,1.04139,2.00432,3.00043}, xticklabels={0,1,10,100,1k},
        xlabel={LLM calls per question (write and selection)}, xlabel style={yshift=3pt},
        legend style={at={(0.99,0.97)}, anchor=north east, font=\scriptsize, draw=black!25, inner sep=1.5pt, row sep=-1pt},
        legend cell align=left]
      \addlegendimage{sysq=black!60}\addlegendentry{Qwen}
      \addlegendimage{sysd=black!60}\addlegendentry{DeepSeek}
      \addplot[seg=cOurs, forget plot] coordinates {(0,80.7) (0,89.4)};
      \addplot[sysq=cOurs, forget plot] coordinates {(0,80.7)};
      \addplot[sysd=cOurs, forget plot] coordinates {(0,89.4)};
      \node[lab=cOurs, anchor=west, font=\scriptsize\bfseries, align=left] at (axis cs:0.07,85.05) {\method{}\\(ours)};
      \addplot[seg=black!45, forget plot] coordinates {(0,69.1) (0,76.6)};
      \addplot[sysq=black!45, forget plot] coordinates {(0,69.1)};
      \addplot[sysd=black!45, forget plot] coordinates {(0,76.6)};
      \node[lab=black!60, anchor=east] at (axis cs:-0.07,72.85) {Adaptive-k};
      \addplot[seg=black!45, forget plot] coordinates {(0.30103,67.0) (0.30103,74.2)};
      \addplot[sysq=black!45, forget plot] coordinates {(0.30103,67.0)};
      \addplot[sysd=black!45, forget plot] coordinates {(0.30103,74.2)};
      \node[lab=black!60, anchor=west] at (axis cs:0.37,70.6) {DynamicRAG};
      \addplot[seg=cLLM, forget plot] coordinates {(1.81291,79.8) (1.81291,89.1)};
      \addplot[sysq=cLLM, forget plot] coordinates {(1.81291,79.8)};
      \addplot[sysd=cLLM, forget plot] coordinates {(1.81291,89.1)};
      \node[lab=cLLM!80!black, anchor=west] at (axis cs:1.88,84.45) {Ref (64 calls)};
      \addplot[seg=cMem, forget plot] coordinates {(2.64640,59.9) (2.64640,63.2)};
      \addplot[sysq=cMem, forget plot] coordinates {(2.64640,59.9)};
      \addplot[sysd=cMem, forget plot] coordinates {(2.64640,63.2)};
      \node[lab=cMem, anchor=east] at (axis cs:2.58,61.55) {HippoRAG~2};
      \addplot[seg=cMem, forget plot] coordinates {(2.99123,54.4) (2.99123,58.1)};
      \addplot[sysq=cMem, forget plot] coordinates {(2.99123,54.4)};
      \addplot[sysd=cMem, forget plot] coordinates {(2.99123,58.1)};
      \node[lab=cMem, anchor=west] at (axis cs:3.06,56.25) {A-Mem};
      \addplot[seg=cMem, forget plot] coordinates {(3.35584,70.3) (3.35584,73.2)};
      \addplot[sysq=cMem, forget plot] coordinates {(3.35584,70.3)};
      \addplot[sysd=cMem, forget plot] coordinates {(3.35584,73.2)};
      \node[lab=cMem, anchor=west] at (axis cs:3.42,71.75) {T-Mem};
    \end{axis}
  \end{tikzpicture}\hfill
  \begin{tikzpicture}
    \begin{axis}[costax, width=0.5\textwidth, xbar, ylabel={},
        title={(b) Questions with evidence in three or more sessions ($n=49$)},
        xmin=0, xmax=100, ymin=0.4, ymax=7.6, ytick={1,...,7},
        yticklabels={HippoRAG~2, A-Mem, Adaptive-k, DynamicRAG, T-Mem, Ref (64 calls), {\color{cOurs}\bfseries\method{} (ours)}},
        y tick label style={font=\scriptsize}, ymajorgrids=false, xmajorgrids,
        xlabel={Accuracy (\%)}, xlabel style={yshift=3pt},
        nodes near coords={\pgfmathprintnumber[fixed, fixed zerofill, precision=1]{\pgfplotspointmeta}}, every node near coord/.append style={font=\tiny, black!70, xshift=-1pt},
        nodes near coords align={horizontal},
        legend style={at={(0.99,0.03)}, anchor=south east, font=\scriptsize, draw=black!25, inner sep=1.5pt, row sep=-1pt},
        legend cell align=left]
      \addlegendimage{area legend, fill=black!55, draw=black!55}\addlegendentry{Qwen}
      \addlegendimage{area legend, fill=white, draw=black!55, thick}\addlegendentry{DeepSeek}
      \addplot[bar width=4.2pt, bar shift=2.6pt, fill=black!45, draw=black!45, forget plot] coordinates {(38.8,3) (45.4,4)};
      \addplot[bar width=4.2pt, bar shift=-2.6pt, fill=white, draw=black!45, thick, forget plot] coordinates {(49.0,3) (51.0,4)};
      \addplot[bar width=4.2pt, bar shift=2.6pt, fill=cMem, draw=cMem, forget plot] coordinates {(34.2,1) (36.7,2) (57.1,5)};
      \addplot[bar width=4.2pt, bar shift=-2.6pt, fill=white, draw=cMem, thick, forget plot] coordinates {(33.2,1) (32.7,2) (57.1,5)};
      \addplot[bar width=4.2pt, bar shift=2.6pt, fill=cLLM, draw=cLLM, forget plot] coordinates {(67.9,6)};
      \addplot[bar width=4.2pt, bar shift=-2.6pt, fill=white, draw=cLLM, thick, forget plot] coordinates {(76.0,6)};
      \addplot[bar width=4.2pt, bar shift=2.6pt, fill=cOurs, draw=cOurs, forget plot, every node near coord/.append style={cOurs, font=\tiny\bfseries}] coordinates {(68.4,7)};
      \addplot[bar width=4.2pt, bar shift=-2.6pt, fill=white, draw=cOurs, thick, forget plot, every node near coord/.append style={cOurs, font=\tiny\bfseries}] coordinates {(77.6,7)};
    \end{axis}
  \end{tikzpicture}
  \Description{Two panels of accuracy on LongMemEval-S. Panel a plots accuracy against LLM calls per question: AgentMemRec at zero calls is as accurate as the reranker with 64 calls, and the LLM-built memories with hundreds to thousands of calls are lower. Panel b shows accuracy on questions whose evidence spans three or more sessions, where AgentMemRec has the highest accuracy for both answering models.}
  \caption{Accuracy on the LongMemEval-S test questions with evidence in at least two sessions, for Qwen (filled) and DeepSeek (hollow). (a) Every compared system by its LLM calls per question at write and selection time ($n=209$). (b) The questions whose evidence spans three or more sessions. \method{} reaches the accuracy of the 64-call reranker with no LLM call, and it is the most accurate system on the questions with the most scattered evidence.}
  \label{fig:cost}
\end{figure*}

\paragraph{LongMemEval-M\@.}
\label{sec:res-m}
On the 1.5M-token histories of LongMemEval-M, \method{} is again the most accurate system for both answering models.
It exceeds Adaptive-k, DynamicRAG, the LongMemEval memory design and HippoRAG~2 by 10.6 to 33.3 points, all gains.
It is at parity with Ref for Qwen ($+3.3$) and exceeds it for DeepSeek ($+5.1$).
This margin over Ref is larger than on LongMemEval-S, consistent with the wider pool, which raises slate completeness on LongMemEval-M by 3.8 points.

\paragraph{Shared protocols.}
\label{sec:res-locomo}
Table~\ref{tab:locomo} places \method{} among memory systems under the protocols that memory papers share, with GPT-4o-mini as the answering model.
On LongMemEval-S, with all 500 questions and a GPT-4o judge, \method{} scores 82.5 on the 348 test questions and 83.2 on all 500, including the development questions on which its ranker was trained.
Both lie above the results that MemOS, TiMem, RecMem and Zep report, which is indicative because judges and data releases can differ.
On LoCoMo, with its write-time records and relative-date tags at T-Mem's context of about 3,800 tokens, \method{} reaches parity with our T-Mem reproduction, 77.0 against 77.4.
Its ranker is the one trained on LongMemEval, applied unchanged, and a ranker trained on LoCoMo development conversations performs at parity with it on test (Appendix~\ref{app:extra}).
It scores above Mem0, Zep and MemOS, and above our reproduction of HyperMem from its official code, which scores 72.6 with the same reader ($+4.4$, a gain).
Its records cost \$0.07 against \$16 for the memory of our T-Mem reproduction.
For RQ1, the recommender formulation outperforms every memory and context-selection method on LongMemEval-S and LongMemEval-M and matches or exceeds the 64-call reranker there, and on LoCoMo it is at parity with T-Mem.

\subsection{RQ2: Which terms of the factorization carry the gain}
\label{sec:rq3}
Table~\ref{tab:designs} splits each system's accuracy into slate completeness and its accuracies on complete and incomplete slates.

\paragraph{Completeness carries the margins.}
Slate completeness ranges from 0.50 for HippoRAG~2 and A-Mem to 0.94 for \method{}, and across its 11 designs it orders accuracy with a Spearman correlation of 0.95 for both answering models.
This supports the product of pool and ranking completeness as a design objective.
A symmetric split of the factorization assigns to completeness 12.0 to 22.1 points of the 11.6 to 26.3-point margins over the four turn-level methods with Qwen, and 12.3 to 23.1 of 12.8 to 31.2 with DeepSeek.
Accuracy on incomplete slates adds 2.4 to 8.4 points, because the incomplete slates of \method{} still hold 63\% of the gold turns, against 34\% to 45\% for these methods, and accuracy on complete slates changes the margins by $-4.7$ to $0.0$.
On question and budget pairs where both slates are complete, \method{} and each compared system differ by $-2.8$ to $+2.2$ points, none significantly.
HippoRAG~2 and A-Mem, whose slates are rendered turns, thus answer about as well as \method{} once the evidence is in the slate, and their LLM calls leave half of their slates incomplete.
T-Mem assembles scenes and items that do not map to turns, so the factorization omits it.
For memory design, this places the bottleneck in the selection stages that set pool and ranking completeness.

\paragraph{Candidate generation.}
The fused 128-turn pool holds all evidence for 0.981 of the questions, against 0.938 for the 64-turn pool of Ref.
Ref completes more of what its pool holds, and the product still favours \method{}, 0.940 against 0.915, as the bound $\rslate\le\rpool$ predicts.
On Ref's own pool, the ranker of \method{} reaches Ref's completeness (0.920 against 0.915), so its advantage comes from the wider pool that a budget of no LLM calls affords.
Widening the pool from 64 to 128 turns raises slate completeness on LongMemEval-M from 0.823 to 0.861.

\paragraph{Learning to rank.}
The LambdaRank model raises ranking completeness from 0.783 for the recall order to 0.959, and from 0.420 to 0.867 at $\kappa\ge3$.
It adds 10.0 and 10.9 points over the recall order and 6.7 and 9.0 over the cross-encoder score, all gains, and the symmetric completeness term accounts for 8.3 and 8.6 of the first two.
In a test question with evidence in five sessions, it moves the five evidence turns from recall ranks 1 to 67 to ranks 1 to 7, and only its slate keeps all of them (Table~\ref{tab:case}).
In another, the single evidence turn names neither the competition nor the purchase, and the ranker lifts it from recall rank 89 to rank 14, the only selection that answers correctly.
Retraining locates this gain in the combination of relevance and turn context: the relevance signals alone reach 0.815, the four context features alone 0.537, and dropping any single context feature costs at most 3 points.
Without any turn-context feature, the ranker still exceeds Adaptive-k, DynamicRAG, HippoRAG~2 and A-Mem by 6.2 to 24.6 points, all gains.
A LightGBM ranker distilled from Qwen3-8B scores, with no gold labels, reaches 0.933, is at parity with it ($+1.2$ and $+1.0$) and exceeds every memory and context-selection method by 9.2 to 30.3 points.
LLM rankers that score the same pool with 128 calls per question reach 0.943 and 0.944 and differ from \method{} by $-1.8$ to $+1.6$ points, none significantly.
A ranker trained on LoCoMo labels alone trails the in-domain ranker by 5.5 and 6.3 points and still exceeds every memory and context-selection method, except T-Mem with Qwen ($+4.9$, not significant).

\paragraph{Slate construction.}
At half and at twice the budget, \method{} differs from Ref by at most 1.8 points, none significantly.
Even with half the budget, it stays above the accuracy that every memory and context-selection method reaches at 2,700 tokens.

\paragraph{Optional write-time records.}
Without the extension and at 2,700 tokens, the cascade scores 75.2 on LoCoMo, above Mem0, Zep and our HyperMem reproduction with no LLM call at write time. Its temporal errors come mostly from complete slates with mis-resolved relative dates, the term $\acomp$ that the tags target.
The first enrichment gains 1.9 points on LoCoMo test conversations, and the final one changes LongMemEval-M accuracy by $0.0$ and $-0.4$ points (Appendix~\ref{sec:ext}).
The tags raise LoCoMo temporal accuracy by 7.2 points and lower LongMemEval-M temporal reasoning for DeepSeek by 8.2 (Qwen judge) through accuracy on complete slates, so their value depends on the benchmark.
For RQ2, slate completeness carries the gains, candidate generation and learning to rank raise it, and accuracy on jointly complete slates shows no significant difference between systems.

\subsection{RQ3: Cost}
\label{sec:efficiency}

We count a generation by a decoder language model as an LLM call, so \method{} makes none at selection time, and its cross-encoder pass is part of its measured serving cost.
On LongMemEval-M, its median selection latency is 329\,ms against 955\,ms for Ref, and on LongMemEval-S its 90th-percentile latency is 321\,ms against 1,156\,ms (Table~\ref{tab:serving}).
At saturation on one A100, selection takes 280 GPU-milliseconds per request for \method{} against 891 for Ref, 3.2 times less, or \$69 against \$220 per million requests at a list price of \$0.89 per A100-hour.
Adding the shared pool construction to both keeps the ratio at 3.0 in GPU time and, with a persistent per-history index, 2.7 in median latency at LongMemEval-M size.
At write time, the default pipeline makes no LLM call, whereas HippoRAG~2, A-Mem and T-Mem make about 440 to 2,270 per history, a cost that a deployed memory spreads over the requests of each user.
On LongMemEval-M, the LongMemEval design and HippoRAG~2 make 251,622 and about 512,000 write-time calls in total.
Figure~\ref{fig:cost}a places every system by its LLM calls per question at write and selection time.
\method{} is the only system that reaches the accuracy of the 64-call reranker with no LLM call, and the LLM-built memories spend hundreds to thousands of calls for 10 to 31 points less.
For RQ3, \method{} costs 3.2 times less than the 64-call reranker per request at the same accuracy, and it spends nothing on LLM calls at write time.

\section{When Item-Wise Ranking Suffices}
\label{sec:analysis}\label{sec:transfer}

The reduction to item-wise ranking and the advantage of the fill over cutoffs rest on two conditions on the reward.
Accuracy on complete and on incomplete slates must not depend on their composition, and adding a unit must not lower the value of a slate.
This section tests both conditions on the evaluated slates and on label-controlled slates.

\begin{table}[t]
  \centering\small
  \caption{Accuracy (\%) under the shared protocols of memory papers, with GPT-4o-mini answering. \method{} scores above every reported system on LongMemEval-S and is at parity with our T-Mem reproduction on LoCoMo. $^\ast$Copied (Appendix~\ref{app:baselines}), reprod.\ our reproduction, and $^\dagger$with the optional write-time records. Best in bold, second underlined.}
  \label{tab:locomo}
  \setlength{\tabcolsep}{2.6pt}
  \begin{tabular}[t]{>{\raggedright\arraybackslash}p{2.44cm}r}
    \toprule
    \multicolumn{2}{c}{\textbf{LongMemEval-S}} \\
    \midrule
    System & Acc.\ $\uparrow$ \\
    \midrule
    Zep$^\ast$ \citep{rasmussen2025zep} & 63.8 \\
    RecMem$^\ast$ \citep{dai2026recmem} & 69.2 \\
    TiMem$^\ast$ \citep{li2026timem} & 76.9 \\
    MemOS$^\ast$ \citep{li2025memos} & \underline{77.8} \\
    \midrule
    \textbf{\method{}} & \textbf{83.2} \\
    \quad test only & (82.5) \\
    \bottomrule
  \end{tabular}\hfill
  \begin{tabular}[t]{>{\raggedright\arraybackslash}p{3.27cm}r}
    \toprule
    \multicolumn{2}{c}{\textbf{LoCoMo}} \\
    \midrule
    System & Acc.\ $\uparrow$ \\
    \midrule
    Zep$^\ast$ \citep{rasmussen2025zep} & 59.2 \\
    Mem0$^\ast$ \citep{chhikara2025mem0} & 64.6 \\
    MemOS$^\ast$ \citep{li2025memos} & 75.8 \\
    T-Mem$^\ast$ \citep{guo2026tmem} & \textbf{80.3} \\
    Full context & 69.9 \\
    HyperMem, reprod.\ \citep{yue2026hypermem} & 72.6 \\
    T-Mem, reprod. & \underline{77.4} \\
    \midrule
    \textbf{\method{}}$^\dagger$ & 77.0 \\
    \bottomrule
  \end{tabular}
\end{table}

\paragraph{Complementarity.}
Every system loses 37 to 64 points of accuracy with Qwen when its slate misses a gold turn, and 37 to 68 with DeepSeek, and this gap $\acomp-\ainc$ scales each stage gain of the factorization.
Completeness falls with the number of evidence sessions, from 0.940 at $\kappa\ge2$ to 0.832 at $\kappa\ge3$ for \method{}.
A slate thus succeeds when it holds all of the evidence, which is the regime in which the item-wise reduction makes completeness the objective.

\paragraph{Composition.}
Across the compared systems, accuracy on jointly complete slates differs by at most 2.8 points at $\kappa\ge2$.
Accuracy on incomplete slates varies more, from 22\% to 47\%, and tracks how much of the gold an incomplete slate still holds.
The first condition thus holds for complete slates and only approximately for incomplete ones, and it begins to fail at $\kappa\ge3$.
There, the jointly complete slates of \method{} score 8.7 points below those of the cross-encoder order for Qwen and 6.5 points below for DeepSeek.
They hold 23 units against 11, most of them turns that the cross-encoder ranks low.
Label-controlled slates that fill the budget around the gold turns show the effect directly (LongMemEval-M, one decode, Qwen judge).
Ranked near-miss filler (Appendix~\ref{app:gaps}) costs 17.7 points for Qwen and 4.7 points for DeepSeek against the gold turns alone, and 39.4 and 14.1 points at $\kappa\ge3$.
Random filler of the same length costs 7.0 points for Qwen and nothing measurable for DeepSeek ($-2.3$).
Adding units can therefore lower the reward, by an amount that depends on the answering model.
These costs do not show that the reward is non-additive, since negative item values would produce them too.

\paragraph{Monotonicity along rankings.}
The fill beats every cutoff only if the reward does not fall as the fill adds units to a prefix of the ranking.
\begin{table}[t]
  \centering\small
  \caption{Stage factorization on LongMemEval-S test, $\kappa\ge2$ ($n=209$), accuracies in \%, so $\mathrm{Acc}=\ainc+(\acomp-\ainc)\,\rslate$. Slate completeness separates the systems, and \method{} has the most complete slates with no LLM call. Lower rows rank its 128-turn pool ($\rpool=0.981$, against 0.938 for the 64-turn pool of Ref). Best in bold, second underlined.}
  \label{tab:designs}
  \setlength{\tabcolsep}{4.8pt}
  \begin{tabular}{lccccc}
    \toprule
    & & \multicolumn{2}{c}{Qwen} & \multicolumn{2}{c}{Acc.\ $\uparrow$} \\
    \cmidrule(lr){3-4}\cmidrule(lr){5-6}
    System & $\rslate$ $\uparrow$ & $\acomp$ & $\ainc$ & Qwen & DeepSeek \\
    \midrule
    Adaptive-k & 0.684 & 87.1 & 30.3 & 69.1 & 76.6 \\
    DynamicRAG & 0.630 & 86.3 & 34.0 & 67.0 & 74.2 \\
    HippoRAG~2 & 0.502 & 89.5 & 30.0 & 59.9 & 63.2 \\
    A-Mem & 0.502 & 86.2 & 22.4 & 54.4 & 58.1 \\
    Ref (64 calls) & 0.915 & 83.8 & 36.6 & 79.8 & 89.1 \\
    \midrule
    Recall order & 0.768 & 84.4 & 25.3 & 70.7 & 78.5 \\
    Cross-encoder & 0.756 & 86.2 & 36.3 & 74.0 & 80.4 \\
    LightGBM, distilled & 0.915 & 83.8 & 33.8 & 79.5 & 88.4 \\
    LLM 8B (128 calls) & \underline{0.926} & 84.4 & 46.8 & \underline{81.6} & \textbf{90.1} \\
    LLM 32B (128 calls) & 0.925 & 86.3 & 36.5 & \textbf{82.5} & 87.8 \\
    \midrule
    \textbf{\method{}} (0 calls) & \textbf{0.940} & 83.0 & 46.0 & 80.7 & \underline{89.4} \\
    \bottomrule
  \end{tabular}
\end{table}
On the ranking of \method{} on LongMemEval-S, choosing per question among 1,350, 2,700 and 5,400 tokens gains $+2.4$ and $0.0$ points over three near-identical 2,700-token slates, neither significant.
Along this ranking, and along LLM-ranked slates of LongMemEval-M, the reward does not fall beyond decode noise, so no per-question cutoff gains measurably.
The per-question cutoffs of Adaptive-k and DynamicRAG leave 32\% and 37\% of their slates incomplete.
The conditions of the item-wise reduction thus hold where the stages of \method{} operate and fail at $\kappa\ge3$ and with near-miss filler.
Set-level objectives that model the composition of a slate are the next recommender designs to test there.

\section{Conclusion}
\label{sec:conclusion}

This paper asked how an LLM agent should choose, for every request, the stored memories that enter its bounded context.
We showed that this choice is budgeted slate recommendation and placed memory systems in one design space of recommender cascades.
Its factorization reduces memory design to raising the completeness of each stage within an LLM budget.
Following this principle, \method{} selects memories without any LLM call.
On LongMemEval, it outperforms LLM-built memories and context selectors, and it matches or exceeds an LLM reranker at a fraction of the cost. On LoCoMo, it outperforms production memory layers.
Its gains come from slates that more often hold all of the evidence, which suggests that recommender-system principles offer an effective and efficient foundation for agent memory.
Set-level objectives, models of user state and learning from use are natural next steps.

\bibliographystyle{ACM-Reference-Format}
\bibliography{refs_compact}

\appendix
\section{Data, Rendering and Judging}
\label{app:protocol}

\paragraph{LongMemEval data.}
We use the cleaned LongMemEval release of 2025-09, so our numbers are not comparable with those of the original paper.
LongMemEval-M and LongMemEval-S hold about 4,894 and 490 turns per question and share their 500 question identifiers.
The frozen split has 51, 69, 22 and 10 development questions and 119, 160, 49 and 20 test questions with $\kappa=1$, $\kappa=2$, $\kappa\ge3$ and unanswerable.
Every answerable question has a gold turn.
\paragraph{LoCoMo data.}
LoCoMo has 10 conversations of 369 to 689 turns and 1,540 scored questions (332 with $\kappa\ge2$), 385 in development conversations 0 to 2 and 1,155 in test conversations 3 to 9.

\paragraph{Rendering and budget.}
On LongMemEval, selected turns are grouped by session, sessions are ordered by date, and each session block follows the official template with the session date and the turns as JSON.
Token counts use the o200k tokenizer, and, unlike the official turn-level retriever, we group the selected turns of one session into one chunk for every system.
On LoCoMo, the slate starts with a fixed header and lists the selected turns under session headers with dates, as speaker and text with image captions.

\paragraph{Answering and judging.}
Answers are limited to 600 tokens, and Qwen3-32B runs with FP8 weights under vLLM with thinking disabled.
DeepSeek~V4.1~Flash runs through OpenRouter with one pinned FP8 provider, no fallback and reasoning disabled.
Every system of the LongMemEval comparisons was answered on DeepInfra.
The GPT-4o judge uses the official per-type prompts at temperature 0, on the test split only, with duplicate judge prompts scored once.

\paragraph{Decoding variability.}
Re-answering identical slates on a second machine agrees per question on 85.6\% for Qwen and 95.7\% for DeepSeek, while re-judging identical answers agrees on 99.9\% (Cohen's $\kappa$ 0.997).
These observations agree with reports that hosted LLM inference is not deterministic at temperature 0 \citep{atil2025nondeterminism}, that batch size and hardware change outputs \citep{yuan2025nondeterminism}, and that prompt formatting changes accuracy \citep{sclar2024quantifying}.
Averaging over four perturbed slates changed no primary label.

\paragraph{Noise across budgets.}
\label{app:repro}
On one backend, the accuracy of the LLM-ranked reference across four near-identical budgets of 2,400 to 2,700 tokens on LongMemEval-M spans 2.6 points for Qwen and 3.3 for DeepSeek, the noise scale behind the 3-point margin.

\paragraph{Qwen judge calibration.}
On 2,000 LongMemEval-S answers by GPT-4o-mini, the Qwen3-32B judge agrees with GPT-4o on 97.3\% (Cohen's kappa 0.904).
\label{app:qwenjudge}
Under the Qwen judge, every primary label of both LongMemEval variants is the same as under GPT-4o except \method{} against Ref with DeepSeek on LongMemEval-M, which is parity ($+3.6$) instead of a gain ($+5.1$).

\section{Implementation Details}
\label{app:impl}
\begin{table}[t]
  \centering\small
  \caption{Feature-group ablation of the ranker (label-only, LongMemEval-S test, each variant retrained). The learned combination exceeds every single feature group by 13 to 41 points of slate completeness at $\kappa\ge2$.}
  \label{tab:ablation}
  \setlength{\tabcolsep}{5pt}
  \begin{tabular}{@{}lcc@{}}
    \toprule
    Features & $\rslate$, $\kappa\ge2$ & $\rslate$, $\kappa\ge3$ \\
    \midrule
    \textbf{All ten} & \textbf{0.949} & \textbf{0.852} \\
    without turn context & 0.815 & 0.541 \\
    \midrule
    retrieval signals only & 0.813 & 0.541 \\
    cross-encoder only & 0.676 & 0.352 \\
    key overlap only & 0.644 & 0.219 \\
    turn context only & 0.537 & 0.582 \\
    \bottomrule
  \end{tabular}
\end{table}

\paragraph{Selection.}
RRF fuses the BM25 and bge-m3 rankings with $k=60$.
The ranker is a LightGBM LambdaRank model with 31 leaves, learning rate 0.05, 300 trees, at least 20 samples per leaf and a fixed seed, trained on the 304 development pools of both variants with one query group per pool.
The cross-encoder feature is the sigmoid bge-reranker-v2-m3 score, and on 128-turn pools the pool-position feature is the global RRF rank.
The distilled ranker uses the same settings with a regression objective on the Qwen3-8B scores, and the code package gives every remaining setting.

\paragraph{LLM-ranked reference.}
Qwen3-8B scores each pooled turn from 0 to 10 with one call per turn, from a prompt that shows the question, its date, the session date, the role and the turn.
Turns are ordered by score, with ties broken by fused rank.

\paragraph{Write-time records.}
\label{sec:ext}
Qwen3-32B writes records once per session with a type, text, date, entities and cited turns, and tags count against the budget.
Without the records, the cascade trails T-Mem on LoCoMo by 2.3 points, and with them, at T-Mem's context of 3,800 tokens (the $^\dagger$ row of Table~\ref{tab:locomo}), it reaches parity.
On the 88 temporal-reasoning questions of LongMemEval-M, the tags change accuracy through complete slates ($-7.5$ of $-8.2$ points) and displace only 5 of 712 gold turn slots.

\begin{table}[t]
  \centering\small
  \caption{Accuracy (\%) on the LongMemEval-S test questions by the number of evidence sessions. \method{} is the most accurate system at $\kappa\ge3$, and at $\kappa=2$ it exceeds every memory and context-selection method by 6.1 to 27.0 points, all gains. Best in bold, second underlined.}
  \label{tab:kappa}
  \setlength{\tabcolsep}{4.5pt}
  \begin{tabular}{@{}lcccc@{}}
    \toprule
    & \multicolumn{2}{c}{$\kappa=2$ ($n=160$)} & \multicolumn{2}{c}{$\kappa\ge3$ ($n=49$)} \\
    \cmidrule(lr){2-3}\cmidrule(lr){4-5}
    System & Qwen $\uparrow$ & DeepSeek $\uparrow$ & Qwen $\uparrow$ & DeepSeek $\uparrow$ \\
    \midrule
    Adaptive-k & 78.4 & 85.0 & 38.8 & 49.0 \\
    DynamicRAG-8B & 73.6 & 81.2 & 45.4 & 51.0 \\
    HippoRAG~2 & 67.8 & 72.3 & 34.2 & 33.2 \\
    A-Mem & 59.8 & 65.9 & 36.7 & 32.7 \\
    T-Mem & 74.4 & 78.1 & 57.1 & 57.1 \\
    Ref (LLM reranker) & \underline{83.4} & \textbf{93.1} & \underline{67.9} & \underline{76.0} \\
    \midrule
    \textbf{\method{}} & \textbf{84.5} & \underline{93.0} & \textbf{68.4} & \textbf{77.6} \\
    \bottomrule
  \end{tabular}
\end{table}

\paragraph{Additional results.}
\label{app:extra}
On LongMemEval-S, the paired differences of \method{} minus each system, in points for Qwen and DeepSeek, are $+11.6$ and $+12.8$ for Adaptive-k, $+13.8$ and $+15.2$ for DynamicRAG, $+20.8$ and $+26.2$ for HippoRAG~2, $+26.3$ and $+31.2$ for A-Mem and $+10.4$ and $+16.1$ for T-Mem, all gains.
Table~\ref{tab:kappa} splits these questions by the number of evidence sessions, and the margins over Adaptive-k, DynamicRAG, HippoRAG~2 and A-Mem grow from 6.1 to 27.0 points at $\kappa=2$ to 23.0 to 44.9 points at $\kappa\ge3$.
In the feature ablation (Table~\ref{tab:ablation}), dropping the turn-context features costs 13.4 points of slate completeness at $\kappa\ge2$ and 31.1 at $\kappa\ge3$, while dropping role, position, age or length alone costs at most 2.9.
The pool position with the four turn-context features alone already reaches 0.903.
Training on 128-turn instead of 64-turn pools changes slate completeness by at most 0.1 points.
Shared sessions place 43 of the 490 test gold turns in development pools, and 74 of their 92 occurrences in training carry label 0.
On the 119 single-session test questions, \method{} scores 89.5 and 94.3, at parity with Ref, and on LoCoMo the ranker trained on LongMemEval performs at parity with one trained on LoCoMo development conversations.

\paragraph{Seeds, latency and cost.}
The LongMemEval split, the ranker, the random filler, DynamicRAG and the bootstrap each use one fixed seed, and the answering models decode at temperature 0.
Latency is the median over 50 sequential test questions on one A100 with warm servers, and serving cost runs all 209 test questions with Ref at 32 requests in flight and \method{} at 8.
Neither includes the shared BM25 and dense pool, which was timed separately for the S histories and for proxies of LongMemEval-M size, and dollars use \$0.89 per A100-hour.

\section{Case Studies}
\label{app:cases}
Table~\ref{tab:case} follows three test questions whose evidence the cheap rankings and the memory systems miss.

\paragraph{Ordering events.}
The five evidence turns of the concerts question open five sessions, and each mentions a concert in passing while asking for something else.
In the 128-turn pool, the recall order ranks them 1, 14, 38, 61 and 67 and the cross-encoder 11, 16, 23, 36 and 47, so their slates keep one and two of them, and Ref's 64-turn pool misses one.
The ranker of \method{} moves them to ranks 1, 2, 3, 6 and 7, in line with the weight of the turn-context features in Table~\ref{tab:ablation}, and only its slate keeps all five.

\begin{table}[t]
  \centering\small
  \caption{Three LongMemEval-S test questions. \emph{Concerts}: What is the order of the concerts and musical events I attended in the past two months, starting from the earliest? \emph{Festivals}: How many movie festivals that I attended? \emph{Investment}: I mentioned an investment for a competition four weeks ago? What did I buy? Ev.\ counts evidence turns in the 2,700-token slate, and OK counts correct slates per answering model (T-Mem answers once from its own memory). \method{} keeps all evidence and answers correctly in all three, and bold marks every evidence turn kept and every slate correct.}
  \label{tab:case}
  \setlength{\tabcolsep}{3pt}
  \begin{tabular}{@{}lcccccc@{}}
    \toprule
    & \multicolumn{2}{c}{Concerts} & \multicolumn{2}{c}{Festivals} & \multicolumn{2}{c}{Investment} \\
    \cmidrule(lr){2-3}\cmidrule(lr){4-5}\cmidrule(lr){6-7}
    System & Ev.$\uparrow$ & OK$\uparrow$ & Ev.$\uparrow$ & OK$\uparrow$ & Ev.$\uparrow$ & OK$\uparrow$ \\
    \midrule
    Adaptive-k & 2/5 & 0/4 & 3/6 & 0/4 & 0/1 & 0/4 \\
    DynamicRAG-8B & 2/5 & 0/4 & 5/6 & 0/4 & 0/1 & 0/4 \\
    HippoRAG~2 & 0/5 & 0/4 & 3/6 & 0/4 & 0/1 & 0/4 \\
    A-Mem & 1/5 & 0/4 & 4/6 & 2/4 & 0/1 & 0/4 \\
    T-Mem & own & 0/1 & own & 1/1 & own & 0/1 \\
    Ref (64 calls) & 3/5 & 0/4 & \textbf{6/6} & \textbf{4/4} & 0/1 & 0/4 \\
    \midrule
    \textbf{\method{}} & \textbf{5/5} & \textbf{4/4} & \textbf{6/6} & \textbf{4/4} & \textbf{1/1} & \textbf{4/4} \\
    \bottomrule
  \end{tabular}
\end{table}

\paragraph{Counting events.}
The festivals question asks for a count, which needs every mention, and its six evidence turns lie in three sessions.
The recall order ranks them 4 to 51, and \method{} ranks all six within the first 14 and keeps them in its slate, as Ref does with 64 LLM calls.
Every selector that drops a mention gives a wrong count, except A-Mem on two of four slates.

\begin{table*}[t]
  \centering\small
  \caption{LongMemEval-M test, $\kappa\ge2$ ($n=209$), with the GPT-4o judge and four perturbed slates per question. Accuracies are in \%. LLM calls are generations per question at selection time, or in total at write time, and tokens are approximate mean slate lengths. $\Delta$ is \method{} minus the row in points, with its label: G gain, P parity, ref.\ a reference arm without a label. Best accuracy in bold, second underlined.}
  \label{tab:main}
  \setlength{\tabcolsep}{4pt}
  \begin{tabular}{@{}llrrrcccc@{}}
    \toprule
    & & Write & Sel. & & \multicolumn{2}{c}{Accuracy $\uparrow$} & \multicolumn{2}{c}{$\Delta$} \\
    \cmidrule(lr){6-7}\cmidrule(lr){8-9}
    System & Pool / ranker & LLM calls & LLM calls & Tokens & Qwen & DeepSeek & Qwen & DeepSeek \\
    \midrule
    Adaptive-k \citep{taguchi2025adaptivek} & history / largest gap & 0 & 0 & 1,180 & 57.8 & 60.8 & $+12.2$ G & $+20.6$ G \\
    DynamicRAG-8B \citep{sun2025dynamicrag} & fused 40 / listwise 8B & 0 & 1 & 2,150 & 59.3 & 62.1 & $+10.6$ G & $+19.3$ G \\
    LongMemEval design \citep{wu2025longmemeval} & rounds / dense, fact keys & 251,622 & 0 & 2,530 & 56.3 & 60.3 & $+13.6$ G & $+21.1$ G \\
    \quad native top-10, uncapped & rounds / dense, fact keys & 251,622 & 0 & 4,945 & 64.1 & 65.6 & $+5.9$ ref. & $+15.8$ ref. \\
    HippoRAG~2 \citep{gutierrez2025hipporag2} & rounds / fact graph, PPR & $\approx$512k & 1 & 2,520 & 46.1 & 48.1 & $+23.9$ G & $+33.3$ G \\
    Ref & fused 64 / Qwen3-8B & 0 & 64 & 2,550 & \underline{66.6} & \underline{76.2} & $+3.3$ P & $+5.1$ G \\
    \midrule
    \textbf{\method{}} & fused 128 / LightGBM & 0 & 0 & 2,550 & \textbf{70.0} & \textbf{81.3} & \multicolumn{2}{c}{reference} \\
    \bottomrule
  \end{tabular}
\end{table*}

\paragraph{Resolving time.}
The investment question refers to an event four weeks before the question date.
Its one evidence turn, from that day, reads ``I actually got my own set of sculpting tools \ldots{} today'' without naming the competition.
The recall order ranks it 89 and the cross-encoder 65, Ref's 64-turn pool misses it, and even LLM rankers that score all 128 turns with Qwen3-8B or Qwen3-32B leave it out of their slates.
\method{} ranks it 14 and keeps it, and it is the only system that answers correctly.

\paragraph{Ordering three events.}
A fourth question asks for the order of three sports events in one month, each mentioned in the opening turn of a session.
The recall order ranks the three turns 17, 29 and 86, and Ref's 64-turn pool misses the last, so Ref answers wrongly on every slate.
\method{} ranks them 1, 2 and 10 and answers correctly on every slate, while Adaptive-k, DynamicRAG, HippoRAG~2 and A-Mem keep at most one of them.

\paragraph{What the cases share.}
In all four questions, the recall order ranks some evidence turns too low for the budget, and the slates of the memory and context-selection methods miss evidence.
The learned ranker lifts every evidence turn into the slate, the completeness term that carries the margins of Section~\ref{sec:rq3}.
It does so with no LLM call, where Ref spends 64 per question and still misses evidence in three of the four questions.

\section{Baselines and Deviations}
\label{app:baselines}
\paragraph{Adaptive-k.}
We follow the official code and settings: the rule keeps the turns up to the largest gap in the sorted bge-m3 similarities of the whole history, ignoring the extreme 10\% tail, and then five more turns.
The kept turns fill the slate in similarity order up to the budget cap, which binds for 4\% of questions.

\paragraph{DynamicRAG\@.}
We run the released DynamicRAG-8B with the official template and prompt on the top 40 fused turns, each with its session date and role and truncated to 600 characters in the selector input, sampling at temperature 0.4.
The selected turns are rendered in full in the returned order up to the budget, and every output was non-empty on the 209 LongMemEval-S test questions.

\paragraph{LongMemEval memory design.}
Values are rounds, keys are user turns expanded with user facts, and the ranking is dense.
The extractor is Qwen3-32B instead of Llama-3.1-8B, with one call per unique user message, and a bge-m3 arm was chosen on development data over the official Stella encoder.
Slates hold whole rounds at the shared budgets, where the paper uses ten uncapped rounds (run here as a reference arm), and time-aware query expansion was not run.

\begin{table}[t]
  \centering\small
  \caption{Selection-stage serving cost on one A100 80\,GB GPU for the 209 LongMemEval-S test questions with $\kappa\ge2$. \method{} makes no LLM call and needs about a third of the GPU time and latency of the 64-call reranker.}
  \label{tab:serving}
  \setlength{\tabcolsep}{4pt}
  \begin{tabular}{@{}lrrr@{}}
    \toprule
    & \method{} & Ref & Ratio \\
    \midrule
    LLM calls per request & \textbf{0} & 64 & n/a \\
    LLM tokens per request & \textbf{0} & 20,861 & n/a \\
    GPU time per request (ms) & \textbf{280} & 891 & 3.2$\times$ \\
    Dollars per million requests & \textbf{69} & 220 & 3.2$\times$ \\
    Median latency, S (ms) & \textbf{277} & 924 & 3.3$\times$ \\
    90th-percentile latency, S (ms) & \textbf{321} & 1,156 & 3.6$\times$ \\
    Median latency, M (ms) & \textbf{329} & 955 & 2.9$\times$ \\
    \midrule
    \multicolumn{4}{@{}l}{\textit{with the shared pool construction}} \\
    GPU time per request (ms) & \textbf{300} & 911 & 3.0$\times$ \\
    Median latency, M size (ms) & \textbf{375} & 1,001 & 2.7$\times$ \\
    \bottomrule
  \end{tabular}
\end{table}

\paragraph{HippoRAG~2.}
We use one passage per round and one index per history, with the official retrieval, recognition filter and personalized PageRank.
Qwen3-32B replaces Llama-3.3-70B, bge-m3 replaces NV-Embed-v2, and the filter prompt tuned for Llama is used verbatim. The filter keeps no fact for 189 of 500 LongMemEval-M and 36.8\% of the LongMemEval-S test questions, by well-formed choice. One added paragraph, fixed on development questions, tells the filter that I and my refer to the user and to keep partly relevant facts.

\paragraph{A-Mem.}
A-Mem runs on LongMemEval-S with one build per question by Qwen3-32B and a fix of one missing import in the official code.
Retrieval returns up to 50 notes, and we report retrieved turns, which led A-Mem's own notes for both answering models in a first run (54.1 against 44.0, and 58.4 against 50.6).

\paragraph{Copied results.}
The protocols of Table~\ref{tab:locomo} are those of the copied papers: all 500 LongMemEval-S questions, and the 1,540 scored LoCoMo questions with the GPT-4o-mini judge run three times.
The copied LongMemEval-S rows use GPT-4o-mini answers with GPT-4o (Zep), GPT-4o-mini (RecMem) or an unnamed judge (TiMem, MemOS).
The copied LoCoMo rows come from the results table of T-Mem \citep{guo2026tmem}, which takes Zep, Mem0 and MemOS from MemOS \citep{li2025memos}.
Our T-Mem reproduction on LoCoMo scores 2.8 points below its reported value, and a same-day re-run is at parity with it ($+0.4$).

\paragraph{HyperMem.}
We ran the official code with Qwen3-32B in place of its GPT-4.1-mini builder, its released retrieval setting without the reranker, and its own embedding model.
It scores 72.6 with the shared reader and 87.5 with its own GPT-4.1-mini chain-of-thought reader, which on a conversation built with GPT-4.1-mini gives 92.1 against the reported 92.7, so the reader explains most of that gap.

\paragraph{T-Mem.}
On LongMemEval-S, each history becomes one user-assistant conversation, and Qwen3-32B (FP8, thinking disabled) replaces GPT-4.1-mini as its builder.
Our answering models read the context that its own QA step reads, persona included.

\section{Results on LongMemEval-M}
\label{app:lme-m}\label{sec:headline}\label{app:gaps}

Under GPT-4o with four slates (Table~\ref{tab:main}), \method{} is at parity with Ref for Qwen ($+3.3$) and exceeds it for DeepSeek ($+5.1$), which the Qwen judge labels parity.
It exceeds the other systems by 10.6 to 33.3 points, all gains, and the LongMemEval design and HippoRAG~2 reach slate completeness 0.495 and 0.342 against 0.861.
At write time, these two make 251,622 and about 512,000 LLM calls on LongMemEval-M, and \method{} makes none.

\paragraph{Single-decode primary test.}
\label{app:k1}
The single-decode runs (Qwen judge) were the primary test, with the averaged protocol as their replication: \method{} scores 68.9 and 79.9 against 65.6 and 76.1 for Ref ($+3.3$ and $+3.8$), parity for both models.

Label-controlled slates on LongMemEval-M ($\kappa\ge2$, all 300 questions, Qwen3-32B judge, one decode, $B=2{,}700$) give slate completeness and Qwen and DeepSeek accuracy of 0.823, 0.690 and 0.780 for the LLM ranker, and 0.850, 0.727 and 0.807 for gold in pool then cross-encoder order. With all gold first, they are 0.773 and 0.883 with cross-encoder order, 0.720 and 0.857 with LLM-ranked filler, and 0.827 and 0.927 with random filler. The gold turns alone (238 tokens) give 0.897 and 0.903.
With the gold turns fixed, ranked near-miss filler thus costs more than random filler of the same length for both answering models.

\end{document}